\let\csname equation*\endcsname\relax
\let\csname endequation*\endcsname\relax
\newcommand{\configSpace}{\mathcal{C}}
\newcommand{\suppPsi}{\mathscr{S}}
\newcommand{\innerProduct}[2]{\langle #1, #2 \rangle}
\newcommand{\op}[1]{\hat{#1}}
\newcommand{\farcomma}{\quad,\quad}
\newcommand{\expfunc}[1]{\exp \left[#1\right]}
\newcommand{\expect}[1]{\left\langle #1 \right\rangle}
\newcommand{\newpara}{\\\\\noindent}
\newcommand{\ii}{\mathrm{i}}
\newcommand{\ev}[2]{\mathrm{ev}_{#1}(#2)}
\newtheorem{theorem}{Theorem}[section]
\newtheorem{lemma}{Lemma}[section]
\newtheorem{definition}{Definition}[section]
\newtheorem{corollary}{Corollary}[section]
\newtheorem{remark}{Remark}[section]
\begin{document}

\title[]{Simultaneous approximation of multiple degenerate states using a single neural network quantum state}

\author{Waleed Sherif}

\address{Institute for Quantum Gravity, Department of Physics, Friedrich-Alexander-Universit\"{a}t Erlangen-N\"{u}rnberg (FAU), Staudtstraße 7, 91058 Erlangen, Germany}
\ead{waleed.sherif@gravity.fau.de}
\vspace{10pt}
\begin{indented}
\item[] \today
\end{indented}

\begin{abstract}
Neural network quantum states (NQS) excel at approximating ground states of quantum many-body systems, but approximating all states of a degenerate manifold is nevertheless computationally expensive. We propose a single-trunk multi-head (ST-MH) NQS ensemble that share a feature extracting trunk while attaching lightweight heads for each target state. Using a cost function which also has an orthogonality term, we derive exact analytic gradients and overlap derivatives needed to train ST-MH within standard variational Monte Carlo (VMC) workflows. We prove that ST-MH can represent every degenerate eigenstate exactly whenever the feature map of latent width $h$, augmented with a constant, has column space containing the linear span of the targets' log-moduli and (chosen) phase branches together with the constant on the common support where all states are non-vanishing. Under this condition, ST-MH reduces the parameter count and can reduce the leading VMC cost by a factor equal to the degeneracy $K$ relative to other algorithms when $K$ is modest and in trunk dominated regimes. As a numerical proof-of-principle, we validate and benchmark the ST-MH approach on the frustrated spin-$\tfrac{1}{2}$ $J_1-J_2$ Heisenberg model at the Majumdar-Ghosh point on periodic ring lattices of up to 8 sites. By obtaining the momentum eigenstates, we demonstrate that ST-MH attains high fidelity and energy accuracy across degenerate ground state manifolds while using significantly lower computing resources. Lastly we provide a qualitative computational cost analysis which incentivise the applicability of the ST-MH ensemble under certain criteria on the latent width.
\end{abstract}


\section{\label{sec:introduction}Introduction}

In quantum many-body physics, the ground state of a system holds central significance. It encodes the fundamental properties of the system at zero temperature and determines phase behaviour, correlation structure, and, in combination with excited states, response functions. For strongly interacting systems, exact solutions are typically unavailable and one often relies on stochastic methods. Quantum Monte Carlo (QMC) methods \cite{Ceperley,Foulkes:2001zz,Carlson:2014vla,Becca_Sorella_2017} are regarded as highly accurate and widely used to approximate ground states numerically. Among them, variational Monte Carlo (VMC) \cite{Becca_Sorella_2017} provides a robust framework in which one parametrises the wave-function using some variational ansatz and then optimises its parameters to minimise the expectation value of the Hamiltonian.
\newpara
Among the host of variational Ans\"{a}tze \cite{Orus:2013kga,PhysRevLett.93.207205,PhysRevLett.112.240502,PhysRevLett.99.220405,PhysRevA.74.022320,PhysRevLett.100.240603,Verstraete:2004cf,Klumper:1993oof,SCHOLLWOCK201196,PhysRevLett.69.2863,RevModPhys.77.259}, neural network quantum states (NQS) \cite{Carleo:2017nvk} have emerged as a powerful ansatz for this problem, specifically for strongly correlated quantum systems. They are capable of representing volume-law entangled states \cite{PhysRevB.106.205136,PhysRevX.7.021021,Gao2017,Passetti:2022ilw,PhysRevLett.122.065301} and consequently have been successfully applied to a broad spectrum of quantum systems, from lattice models \cite{Carleo:2017nvk,PhysRevB.100.125124,Luo:2021gls}, fermionic systems \cite{PhysRevResearch.2.033429,Hermann2020}, bosonic field theories \cite{Luo:2020stn,Martyn:2022oll,Denis:2025njk}, where conventional methods often struggle due to the sign problem \cite{PAN2024879,Troyer:2004ge} or limited representational flexibility, and recently even in loop quantum gravity \cite{Sahlmann:2024pba,Sahlmann:2024kat}. Aside from excited states applications, one typical use-case is focused on learning a single eigenstate, namely the non-degenerate ground state of the system. 
\newpara
In the case of degenerate (ground) spaces, the most naive manner to obtain multiple degenerate states is to optimise independent networks for each target state. This is both computationally expensive and susceptible to convergence toward redundant or non-orthogonal solutions, especially when degeneracies arise from symmetries or frustration. This led to the development of several general methods \cite{Higgott:2018doo,10.1063/5.0030949,Pfau:2023azx,PhysRevB.111.L161116,PhysRevLett.121.167204,Wheeler_2024} in the context of VMC to overcome such issues.
\newpara
A state-averaging approach has been used to simultaneously optimise a set of ground and excited states, where a common set of orbitals and Jastrow are used to construct all the states \cite{Filippi2009}. Recently, a general strategy to address this problem has been put forth whereby one uses an ensemble-based learning process \cite{Wheeler_2024}. In it, each of the $K$-many degenerate states is represented by a separate copy of the used variational ansatz, each with its own set of parameters. This setup naturally supports orthogonalisation, as the cost function is modified to include an overlap and orthogonality penalty term \cite{Wheeler_2024}. In the neural network context, this corresponds to having $K$ independent networks. The cost of this $K$-duplications of the network entails a significant increase in the total parameter count. In the context of NQS, since parameter updates occur after computing gradients of the cost function with respect to the parameters, one very quickly is met with a computational bottleneck which may severely limit scalability.
\newpara
Perhaps a bit more distant to the problem of finding ground states, there is a growing class of efficient physics machine learning (ML) algorithms that learn multiple related quantities from a shared representation with light, task-specific outputs. Notably, multi-head physics informed neural networks \cite{Tarancón-Álvarez2025,Pellegrin2022,lhydra} such as L-HYDRA \cite{lhydra} employ a single non-linear ``body" with several linear heads to address families of partial differential equation tasks, and multi-state neural models in quantum chemistry predict several state energies and couplings from a common latent space \cite{Axelrod2022}.
\newpara
In this work, motivated by multi-task learning in classical ML, by these physics analogues, and by the shared-orbitals/Jastrow practice \cite{Filippi2009}, we propose an efficient ensemble approach when utilising NQS for approximating degenerate ground states: a single-trunk multi-head (ST-MH) ensemble, whereby one network is used to approximate $K$-many degenerate states. 
\newpara
This approach retains a single shared feature-extracting trunk while appending lightweight, linearly parametrised heads for each target eigenstate. We demonstrate that this construction is not merely a heuristic compression but that, under precise conditions on the trunk's width, it can represent the entire degenerate manifold exactly. Specifically, if the latent width $h$ of the shared trunk satisfies $h + 1 \geq r_\mathrm{both}$, where $r_\mathrm{both}$ is the combined linear rank of the states' log-moduli and phases on a common support in the degenerate target manifold where the states are non-vanishing (see Appendix \ref{app:sufficiencytheorem}, Definition \ref{def:rank}), then all degenerate eigenstates can be captured without loss of expressivity. Furthermore, we derive closed form analytic gradients for the energy and overlap penalties as well as sampling strategies, enabling integration into standard VMC workflows.
\newpara
We demonstrate the applicability of the proposed NQS ensemble approach by obtaining the degenerate momentum eigenstates of the frustrated spin-$\tfrac{1}{2}$ $J_1-J_2$ Heisenberg chain on a periodic ring with even sites at the Majumdar-Ghosh point \cite{C_K_Majumdar_1970,10.1063/1.1664978,10.1063/1.1664979}. Across various metrics, including fidelity, orthogonality, memory and runtime footprint, we show that the ST-MH NQS ensemble achieves comparable accuracy all the meanwhile maintaining a substantially lower resource demand. We also provide a qualitative analysis of the gained efficiency in compute cost when using the ST-MH NQS ensemble and specify a qualitative threshold which takes into account both $K$ and the trunk width to obtain such efficiency. This offers a qualitative threshold regarding the applicability of this proposed approach based on the number of target states. The conclusion made is therefore that the proposed ensemble approach offers a principled and practical route to learning degenerate eigenspaces efficiently, when optimising using the penalty based cost method, provided that the linear rank condition $r_\mathrm{both} \leq h + 1$ is satisfied and that the phase functions admit single-valued branches that can be represented as affine functions of the shared trunk features (mod $2\pi$).


\section{\label{sec:vmc}NQS Ensembles}

The goal of variational Monte Carlo is to iteratively optimise some ansatz wave-function to approximate a ground state \cite{Becca_Sorella_2017}. In this work, we will focus on approximating several ground states and the chosen variational ansatz is the neural network quantum state ansatz. In the non-degenerate ground space case, VMC optimisation aims to minimise the cost function
\begin{equation}
    \label{eq:cost_functional_nqs}
    C := \frac{\innerProduct{\Psi_\theta}{\op{H}\Psi_\theta}}{\innerProduct{\Psi_\theta}{\Psi_\theta}},
\end{equation}
where $\op{H}$ is the (Hermitian) Hamiltonian of the quantum many-body system at hand and $\theta$ are the variational parameters of the ansatz. Equation \eqref{eq:cost_functional_nqs} is interchangeably denoted the name \emph{Energy}. We begin by rewriting the NQS ansatz to isolate the linear and non-linear features. For a spin configuration $S = \lbrace \sigma_j^z = \pm1\rbrace_{j = 1}^N$, standard expression in the literature for the trial wave-function $\Psi_\theta$ being a NQS with a Restricted Boltzmann machine (RBM) architecture is \cite{Carleo:2017nvk}
\begin{equation}
    \psi_{\mathrm{RBM}} (S) = \sum_{\lbrace h_i \rbrace}\expfunc{\sum_{j = 1}^N a_j\sigma_j^z + \sum_{i = 1}^L b_i h_i + \sum_{i = 1}^L \sum_{j=1}^N W_{ij}h_i \sigma_j^z},
\end{equation}
where here $h_i \in \lbrace \pm1\rbrace$ is a set of $L$ hidden spin variables. Since the RBM does not offer any intra-layer interactions, then \cite{Carleo:2017nvk}
\begin{align}
    \sum_{\lbrace h_i \rbrace}\expfunc{\sum_{i = 1}^L b_i h_i + \sum_{i = 1}^L \sum_{j=1}^N W_{ij}h_i \sigma_j^z} = \prod_{i = 1}^L2\cosh \left(b_i + \sum_{j = 1}^N W_{ij} \sigma_j^z \right),
\end{align}
allowing us to write $\psi_{\mathrm{RBM}}$ as
\begin{equation}
    \psi_{\mathrm{RBM}}(S) = \expfunc{\sum_j a_j \sigma_j^z} \prod_{i = 1}^L 2\cosh \left(b_i + \sum_{j = 1}^N W_{ij} \sigma_j^z \right).
\end{equation}
or alternatively, in logarithmic form as \cite{Carleo:2017nvk}
\begin{equation}
    \ln \psi_{\mathrm{RBM}} = \sum_{j = 1}^N a_j \sigma_j^z + \sum_{i = 1}^L \ln \left( 2\cosh \left(b_i + \sum_{j = 1}^N W_{ij} \sigma_j^z \right)\right).
\end{equation}
For simplicity, define a real feature vector
\begin{align}
    f_\vartheta(S) = (\sigma_1^z, \cdots, & \sigma_N^z,  \ln 2\cosh (b_1 + \sum_{j=1}^N W_{1j} \sigma_j^z),  \cdots , \ln 2\cosh (b_L + \sum_{j=1}^N W_{Lj} \sigma_j^z ))^\top \in \mathbb{R}^{h \times 1},
\end{align}
where $h = N + L$\footnote{Note that we will use the notation $\mathbb{R}^{1\times N}$ and $\mathbb{R}^{N \times 1}$ to make it explicit that we are using ``row" or ``column" vectors, respectively, in $\mathbb{R}^N$.}. Now, define the linear head
\begin{equation}
    \alpha = (a_1, \cdots, a_N, 1, 1, \cdots, 1) \in \mathbb{R}^{1 \times h}.
\end{equation}
This allows us to then write $\psi_{\mathrm{RBM}}$ as
\begin{equation}
    \ln\psi_{\mathrm{RBM}}(S) = \alpha \cdot f_\vartheta(S) \implies \psi_\mathrm{RBM} = \expfunc{\alpha \cdot f_\vartheta(S)}
\end{equation}
Note that in the equation above, in general one may have a $+ \beta$ term in the exponential. In what follows, we will denote by $f_\vartheta$ the ``trunk" (non-linear features) and by $(\alpha, \beta)$ the linear head parameters. Now for any architecture, if the trunk is real valued, one can generally write
\begin{equation}
\label{eq:STSHansatz}
    \psi_\theta(x) = \expfunc{\alpha \cdot f_{\theta^t}(x) + \beta} \expfunc{\ii\varphi \cdot f_{\theta^t}(x) + \ii\gamma},
\end{equation}
where $\theta := (\theta^t, \alpha, \varphi, \beta, \gamma)$ are the free parameters. Here, $\alpha, \varphi \in \mathbb{R}^{1 \times h}$ and $\beta, \gamma \in \mathbb{R}$. The trunk $f_{\theta^t}$ is simply a map from the configuration space to some $\mathbb{R}^{h\times 1}$ $f_{\theta^t} : \configSpace \rightarrow \mathbb{R}^{h\times 1}$ with trunk parameters denoted by $\theta^t$ ($t$ for trunk). Generally, one may allow for a complex valued trunk or even allow for two independent trunks for the phase and amplitude parts of the ansatz, but we will continue with such a common real valued trunk for simplicity. In this case, the number of heads is one (number of linear readout layers, or equivalently the number of amplitudes produced by a given $f_{\theta^t}$). Given that there is also one trunk only, we denote this approach as a \emph{single-trunk single-head} (ST-SH) class.
\newpara
Standard VMC methods would then update the free parameters of the network such that the cost is minimised. This is done efficiently by using local estimators such that \cite{Lange:2024nsr,Hibat_Allah_2020}
\begin{equation}
    \label{eq:cost_fn_nqs}
    E = \sum_x p_\theta(x) E_{loc}(x),
\end{equation}
where 
\begin{equation}
   p_\theta(x) := \frac{|\psi_\theta(x)|^2}{\sum_y |\psi_\theta(y)|^2} \farcomma E_{loc}(x) := \sum_y \op{H}_{xy}\frac{\psi_\theta(y)}{\psi_\theta(x)},
\end{equation}
are the Born probability and the local estimator of $\op{H}$ respectively. The gradients of the energy can be shown to be \cite{Lange:2024nsr,Hibat_Allah_2020}
\begin{equation}
\label{eq:gradSTSH}
    \partial_{\theta_i} E = 2\Re [\expect{(\mathcal{O}_\theta^i - \expect{\mathcal{O}_\theta^i})^*(E_{loc} - \expect{\op{H}}_\theta)}],
\end{equation}
where $\mathcal{O}_\theta^i(x) := \partial_{\theta_i} \ln \psi_\theta(x)$, with $i = 1, \cdots, P$, is the log derivative, where $P$ is the total number of free variational parameters. Once computed, the network parameters are updated using some descent method until a minimum for $C$ is reached.


\subsection{\label{sec:expressivity}Approximating multiple-states}

One naive way to use this standard VMC prescription outlined above to obtain all $D$ ground states in a degenerate system would require (at best) running $D$ simulations with different seeds. Even then, there is no guarantee that two obtained states from two independent simulations are different eigenstates. Thus, this sequential approach of training several ST-SH trial NQS wave-functions independently is not the most suitable for degenerate systems.
\newpara
Accordingly, among different algorithms it was proposed to consider an ensemble of $K$ many arbitrary trial wave-functions and to modify the cost function to include a penalty term that enforces orthogonality \cite{Wheeler_2024}. In the language of NQS, this would mean that one has $K$-many heads (amplitudes) each arising from equally $K$-many trial wave-functions, each with their own trunk (e.g. $K$-many independent networks for each target state). For uniformity, we denote this proposed approach in the current context as a \emph{multi-trunk multi-head} (MT-MH) ensemble. In that context, the single states in equation \eqref{eq:STSHansatz} are labelled by $k = 1, \cdots, K$ and thus the $k$\textsuperscript{th} amplitude is obtained from the $k$\textsuperscript{th} NQS via
\begin{equation}
\label{eq:MTMHansatz}
    \psi_{\pi_k} (x) = \expfunc{\alpha_k \cdot f_{\theta_k^t}(x) + \beta_k} \expfunc{\ii\varphi_k \cdot f_{\theta_k^t}(x) + \ii\gamma_k}.
\end{equation}
where $\pi_k$ is the set of all variational parameters $(\alpha_k, \beta_k, \varphi_k, \gamma_k, \theta_k^t)$ for the target state $k$ which form a disjoint vector $\pi_k \in \mathbb{R}^{P_k}$ and hence and the total parameter set for the ensemble is $\Theta = (\pi_1, \cdots, \pi_K) \in \mathbb{R}^{P_{tot}}$. Note that here, $\theta_k^t$ does not denote the $k$\textsuperscript{th} trunk parameter, but rather the parameters of $k$\textsuperscript{th} trunk. The cost function \eqref{eq:cost_functional_nqs} then is modified to include an additional overlap and orthogonality penalty term $\mathcal{P}$ such that \cite{Wheeler_2024}
\begin{equation}
    \mathcal{P}(\Theta) := \frac{1}{2} \sum_{k \neq l} \frac{|\Sigma_{kl}|^2}{N_k N_l},
\end{equation}
whereby $\mathcal{P} = 0$ iff the normalised states are orthonormal. Here, $\Sigma_{kl}(\Theta) := \innerProduct{\psi_{\pi_k}}{\psi_{\pi_l}}$ define the overlaps and $N_k := \Sigma_{kk}$.
\newpara
For some weights $w_k$ such that $\sum_k w_k = 1$, the total ensemble cost is then
\begin{equation}
\label{eq:costMTMH}
    C(\Theta) := \sum_{k = 1}^K w_k \frac{\innerProduct{\psi_{\pi_k}}{\op{H} \psi_{\pi_k}}}{\innerProduct{\psi_{\pi_k}}{\psi_{\pi_k}}} + \lambda \mathcal{P}(\Theta),
\end{equation}
where $\lambda > 0$ is the penalty strength. This modification of the cost ensures that, for carefully chosen weights and penalty factor, one obtains $K$ many solutions which minimise the energy but are also mutually orthogonal. One can then obtain expressions for the gradients of this cost function using standard methods.
\newpara
The MT-MH NQS ensemble training, while successful in obtaining $K$ eigenstates from a degenerate lowest energy eigenspace, has a rather large computational footprint. We now propose an ensemble approach whereby we consider \emph{one} NQS with $K$-many linear read-outs, each targeting one of the $K$ degenerate eigenstates. As the read-outs share the same trunk, we therefore denote this as a \emph{single-trunk multi-head} (ST-MH) ensemble.  
\newpara
Consider now a \emph{shared} trunk $f_\vartheta : \configSpace \rightarrow \mathbb{R}^{h \times 1}$ with parameters $\vartheta \in \mathbb{R}^T$ where $T$ is the number of trunk parameters. For each head $k$, one has the \emph{head} parameters $\theta^{(h)}_k := (\alpha_k, \varphi_k, \beta_k, \gamma_k) \in \mathbb{R}^{P_H}$ with, as usual, $\alpha_k, \varphi_k \in \mathbb{R}^{1 \times h} \,,\, \beta_k , \gamma_k \in \mathbb{R}$. The set of all ensemble parameters is then $\Theta := (\vartheta, \theta^{(h)}_1, \cdots, \theta^{(h)}_K) \in \mathbb{R}^{P_T + KP_H}$. The $k$\textsuperscript{th} target eigenstate can be obtained from the head $k$ simply by
\begin{equation}
    \psi_{\pi_k}(x) = \expfunc{\alpha_k \cdot f_\vartheta(x) + \beta_k} \expfunc{\ii\varphi_k \cdot f_\vartheta(x) + \ii\gamma_k},
\end{equation}
where now $\pi_k = (\theta_k^{(h)}, \vartheta)$, which can be written in the compact form
\begin{equation}
    \psi_k(x) = \expfunc{\chi_k \cdot f_\vartheta(x) + c_k},
\end{equation}
where $\chi_k := \alpha_k + \ii\varphi_k \in \mathbb{C}^{1 \times h}$ and $c_k := \beta_k + \ii\gamma_k \in \mathbb{C}$. It can be shown (see Appendix \ref{app:VMCNQSClasses}) that one can obtain consistent expressions for the gradients of a cost function similar to \eqref{eq:costMTMH} enforcing the same penalty term $\mathcal{P}$ (see Appendix \ref{app:VMCNQSClasses}). 
\newpara
For the case of degenerate eigenspaces, one can (see Appendix \ref{app:sufficiencytheorem}) associate to any set of orthonormal eigenstates a linear modulus span $\mathcal{R}_G$ (span of all target log-moduli on a common support $\suppPsi$ where the target states have a positive modulus) whose dimension $r_G = \dim\mathcal{R}_G$ we call the \emph{linear modulus rank} of the degenerate manifold. Similarly for the phases, one can define the linear phase span $\mathcal{R}_\Omega$ (also on their common support) whose dimension $r_\Omega = \dim \mathcal{R}_\Omega$ we call the \emph{linear phase rank} of the degenerate manifold. This enables us to define a combined rank $r_\mathrm{both} = \dim \mathrm{span}\bigl(\mathcal{R}_G \cup \mathcal{R}_\Omega\bigr)$ which is the combined \emph{linear rank} of the states' log-moduli and phases on the common support in the degenerate target manifold.
\newpara
As shown in Theorem \ref{thm:sufficiency} (Appendix  \ref{app:sufficiencytheorem}), the representational capacity of a ST-MH NQS ensemble is governed precisely by this rank. If $r_\mathrm{both} \leq h + 1$, where $h$ is the latent width of the shared trunk and $r_\mathrm{both}$ is computed from the chosen single-valued phase branches on $\suppPsi$, then the ST-MH ensemble has the capacity to exactly represent \emph{all} $D$ degenerate eigenstates on $\suppPsi$. Conversely, if $r_\mathrm{both} > h + 1$, then no single trunk of width $h$ suffices to represent the entire degenerate manifold. As such, using the orthogonality penalty outlined above, one can use a single NQS wave-function which has $D$-many lightweight linear-heads to obtain all $D$-many degenerate ground states, instead of using $D$-many independent NQS wave-functions, when the criterion above is met.


\subsection{\label{sec:numericalstudy}Numerical proof of principle}

To demonstrate the applicability of the ST-MH NQS ensemble, we compare its performance to that of the MT-MH ensemble and further demonstrate true full ground space resolution. We first present the model which we will consider in this work, described in Section \ref{sec:physicalmodel}. Next, we provide a numerical proof of principle which is performed over two steps in Sections \ref{sec:firstpop} and \ref{sec:extension}, the first of which is a computational efficiency comparison between the two ensemble approaches. The second is the verification of the true full ground space resolution of the considered model. 


\subsubsection{\label{sec:physicalmodel}Physical model}
We select a model which conforms to the bounds of the Theorem \ref{thm:sufficiency} and its requirements, namely the frustrated spin-$\tfrac{1}{2}$ $J_1-J_2$ Heisenberg chain on a periodic ring with even $N$ sites defined by the Hamiltonian
\begin{equation}
\label{eq:heisenberghamiltonian}
    \op{H} = J_1 \sum_{j = 1}^N \vec{S}_j \cdot \vec{S}_{j+1} + J_2 \sum_{j = 1}^N \vec{S}_j \cdot \vec{S}_{j+2},
\end{equation}
where $\vec{S}_j = (S_j^x, S_j^y, S_j^z)$ and $\vec{S}^2_j = 3/4$ and the site indices are understood modulo $N$. Throughout, we work in a fixed $S^z$ sector. For even $N$, the ground space then lies in $S^z_\mathrm{tot} = 0$ whose computational basis is of $\binom{N}{N/2}$ dimensions.
\newpara
At the Majumdar-Ghosh (MG) point $J_2 = J_1 / 2$ \cite{C_K_Majumdar_1970,10.1063/1.1664978,10.1063/1.1664979}, the model admits an exact solution. Introducing 3-site total spin operators $\vec{\tau}_j = \vec{S}_j + \vec{S}_{j+1} + \vec{S}_{j+2}$, one can see that summing the direct expansion of $\vec{\tau}^2$ over $j$ on the ring counts each on-site Casimir three times: each nearest-neighbour pair four times (each nearest-neighbour pair belongs to two adjacent triples and, in each, enters with the cross-term prefactor 2, yielding an overall factor of 4) and each next-nearest neighbour pair twice. Inserting that into \eqref{eq:heisenberghamiltonian}, at the MG point this yields the positive-semidefinite representation
\begin{equation}
    \label{eq:mgheisenberg}
    \op{H}_\mathrm{MG} = \frac{J_1}{4}\sum_{j=1}^N \vec{\tau}^2_j - \frac{9J_1}{16}N.
\end{equation}
Equivalently, one can decompose each triple into $S_\mathrm{tot} = \tfrac{1}{2}, \tfrac{3}{2}$ sectors. Since $\vec{\tau}_j^2 = 3/4 + 3\Pi_j^{(3/2)}$ where $\Pi^{(3/2)}_j$ being the projector onto total spin $\tfrac{3}{2}$ on $(j, j+1, j+2)$, one can write
\begin{equation}
\label{eq:projectedmg}
    \op{H}_\mathrm{MG} = \frac{3J_1}{4}\sum_{j = 1}^N \Pi_j^{(3/2)} - \frac{3J_1}{8}N \farcomma \Pi_j^{(3/2)} = \frac{1}{3}\left(\vec{\tau}_j^2 - \frac{3}{4} \right).
\end{equation}
Hence, the ground energy is bounded from below by $-3J_1N/8$, with equality iff the state lies in the kernel of every $\Pi_j^{(3/2)}$. 
\newpara
The two exact ground states are obtained as period-2 products of nearest neighbour singlets. Writing
\begin{equation}
    \ket{s_{i, i+1}} = \frac{1}{\sqrt{2}} \left(\ket{\uparrow_i \,\,\, \downarrow_{i+1}} - \ket{\downarrow_i \,\,\, \uparrow_{i+1}}\right),
\end{equation}
with $S_i^z \ket{\uparrow_i} = 1/2 \ket{\uparrow_i}, S_i^z \ket{\downarrow_i} = -1/2 \ket{\downarrow_i}$, one can then define
\begin{equation}
    \ket{\Phi_A} = \bigotimes_{m = 1}^{N/2} \ket{s_{2m-1, 2m}} \farcomma \ket{\Phi_B} = \bigotimes_{m = 1}^{N/2} \ket{s_{2m, 2m+1}}.
\end{equation}
Each 3-site block $(j, j+1, j+2)$ inside either product contains exactly one singlet bond together with a decoupled spin-$\tfrac{1}{2}$, hence its total spin is $S_\mathrm{tot} = \tfrac{1}{2}$ and $\Pi_j^{(3/2)} \ket{\Phi_{A/B}} = 0$ for all $j$. Consequently, 
\begin{equation}
    \op{H}_\mathrm{MG} \ket{\Phi_{A/B}} = -\frac{3J_1}{8}N \ket{\Phi_{A/B}} \farcomma E_0(N) = -\frac{3J_1}{8}N,
\end{equation}
and thus the two states saturate the lower bound in equation \eqref{eq:projectedmg}. Translations by one site, $\op{T} \vec{S}_j \op{T}^\dagger = \vec{S}_{j+1}$, exchanges two coverings ($\op{T}\ket{\Phi_A} = \ket{\Phi_B}$ and $\op{T}\ket{\Phi_B} = \ket{\Phi_A}$). On the periodic ring the ground manifold is thus the 2-dimensional subspace $\mathcal{G} = \mathrm{span}\lbrace \ket{\Phi_A}, \ket{\Phi_B} \rbrace$, and it is natural to resolve it into crystal-momenta eigenstates. The one-site translation restricted to $\mathcal{G}$ has eigenvalues $\pm 1$. One can choose normalised representatives as
\begin{equation}
\label{eq:most}
    \ket{\Psi_{k_\pm}} = \frac{\ket{\Phi_A} \pm \ket{\Phi_B}}{\sqrt{2(1 \pm \expect{\Phi_B | \Phi_A})}} \farcomma \op{T} \ket{\Psi_{k_\pm}} = e^{ik_\pm}\ket{\Psi_{k_\pm}},
\end{equation}
with $k_+ = 0, k_- = \pi$. For even $N$, the overlap of the two coverings is elementary, $\expect{\Phi_B | \Phi_A} = (-1)^{N/2} 2^{1-N/2}$, thus the normalisation above is explicit. Both $\ket{\Phi_{A/B}}$, and hence $\ket{\Psi_{k_\pm}}$, are total-spin singlets and belong to the $S^z_\mathrm{tot} = 0$ sector.
\newpara
From the support point of view, one can in the computational basis label states such that $\ket{\mathbf{s}} = \ket{s_1 \cdots s_N}$ denotes the $S^z$-product states with $s_j \in \lbrace \downarrow, \uparrow \rbrace$ and $S_\mathrm{tot}^z = 0$. One can see that $\mathrm{supp}\Phi_A$ and $\mathrm{supp}\Phi_B$ each has a cardinality of $2^{N/2}$. One either support, the amplitudes have flat modulus $2^{-N/4}$ (a product of $1/\sqrt{2}$ per singlet) and bond-dependent phases determined by the singlet orientations. Off of these supports, the amplitudes vanish identically. The two supports intersect only on the two Néel configurations, and thus $|\mathrm{supp}\Phi_A \cup \mathrm{supp}\Phi_B| = 2^{1 + N/2} - 2$. 
\newpara
The momentum eigenstates live on the union support and differ by relative phases between the two coverings. In particular, on the two Néel configurations, the relative phase between $\ket{\Phi_A}$ and $\ket{\Phi_B}$ is $(-1)^{N/2}$, and one of the combinations in \eqref{eq:most} has an exact node there, namely
\begin{equation}
    \expect{\text{Néel}|\Psi_{k_+}} = 0 \,\,\,\text{if } N = 2 (\mathrm{mod}4) \farcomma \expect{\text{Néel}|\Psi_{k_-}} = 0 \,\,\,\text{if } N = 0 (\mathrm{mod}4),
\end{equation}
while away from these two configurations $\ket{\Psi_{k_-}}$ and $\ket{\Psi_{k_+}}$ have nonzero amplitudes. Altogether, the common support $\mathrm{supp}\Psi_{k_+} \cap \mathrm{supp}\Psi_{k_-}$ equals $\mathrm{supp}\Phi_A \cup \mathrm{supp}\Phi_B$ with the two Néel strings removed.
\newpara
The representation \eqref{eq:projectedmg} ensures that the states described above are exact eigenstates of $\op{H}_\mathrm{MG}$, not merely variational minima, and fixes the ground energy density to $E_0 / N = -3J_1/8$. For the purpose of symmetry resolution on the periodic ring, it is natural to work with the translation eigenstates $\ket{\Psi_{k_\pm}}$ carrying crystal momenta $k = 0, \pi$. The dimer products $\ket{\Phi_{A/B}}$ are simply symmetry-broken representatives of the same 2-dimensional ground manifold and are mapped into one another by a 1-site translation.


\subsubsection{\label{sec:firstpop}ST-MH versus MT-MH empirical efficiency and accuracy}
The first part will concern the performance analysis (both in terms of computational resources and optimisation accuracy) of the ST-MH approach compared to that of the MT-MH approach. Throughout, we set $J_1 = 1$ and hence $J_2 = 0.5$. Further, the trunk for both the ST-MH and MT-MH ensembles is composed of a fully connected 2-hidden layer multi-layer perceptron (MLP) whereby both hidden layers have the same fixed width and one ReLU activation layer sits between them. The orthogonality penalty strength is started from an initial value of $\lambda_s = 1\times 10^{-3}$ and annealed to $\lambda_f = 1.0$ over 100 optimisation steps. Further, we use an Adam optimiser \cite{adam} with a fixed learning rate of $1\times 10^{-3}$. Samples are generated using a Metropolis-Hastings sampler \cite{10.1063/1.1699114} with a transition kernel which proposes new configurations by exchanging the spins of two sites. The initial proposal is created within the $S^z$ sector and hence, this transition kernel preserves the sector. The number of sweeps for the sampler is chosen to be 5 with a total of 512 samples generated at each sampling step distributed over 8 Markov chains. Unless mentioned otherwise, the ensembles use a shared mixture sampling prescription (see Appendix \ref{app:methodology}). For the following, we set $N = 4$, and consequently, $E_0(N=4) = -1.5$. As the first part concerns performance metrics, \emph{full} eigenspace resolution for different values of $N$ is discussed in Section \ref{sec:extension} below.
\newpara
We note that as this current work merely serves as a proof-of-principle, the purpose of this work is therefore \emph{not} to find an optimal network architecture to efficiently solve this physical model but rather focus on the applicability of the ST-MH ensemble approach. Therefore, at no point do we concern ourselves with, for example, choosing an architecture which has fewer number of parameters than there are states in the space. However, we demonstrate through brief ablation studies in Section \ref{sec:extension} that one can, as one expects, approach the same problem with much smaller network sizes.
\newpara
We now demonstrate that using the ST-MH approach, one obtains solutions as accurately, as quickly, but more efficiently compared to the MT-MH approach. The first comparison done is observing both the runtime and the number of parameters as the number of target states $K$ increases. Figure \ref{fig:params_vs_k} shows the parameters count for different $K$. Keeping $N = 4$ and $h = 32$ for both ensembles, we grow the number of heads for the ST-MH and duplicate accordingly the wave-functions for MT-MH ensembles. The measured counts match a linear-in-$K$ scaling for both ST-MH and MT-MH parameter counts (MT-MH duplicates the trunk per head). 

\begin{figure}[h]
    \centering
    \includegraphics[width=0.6\linewidth]{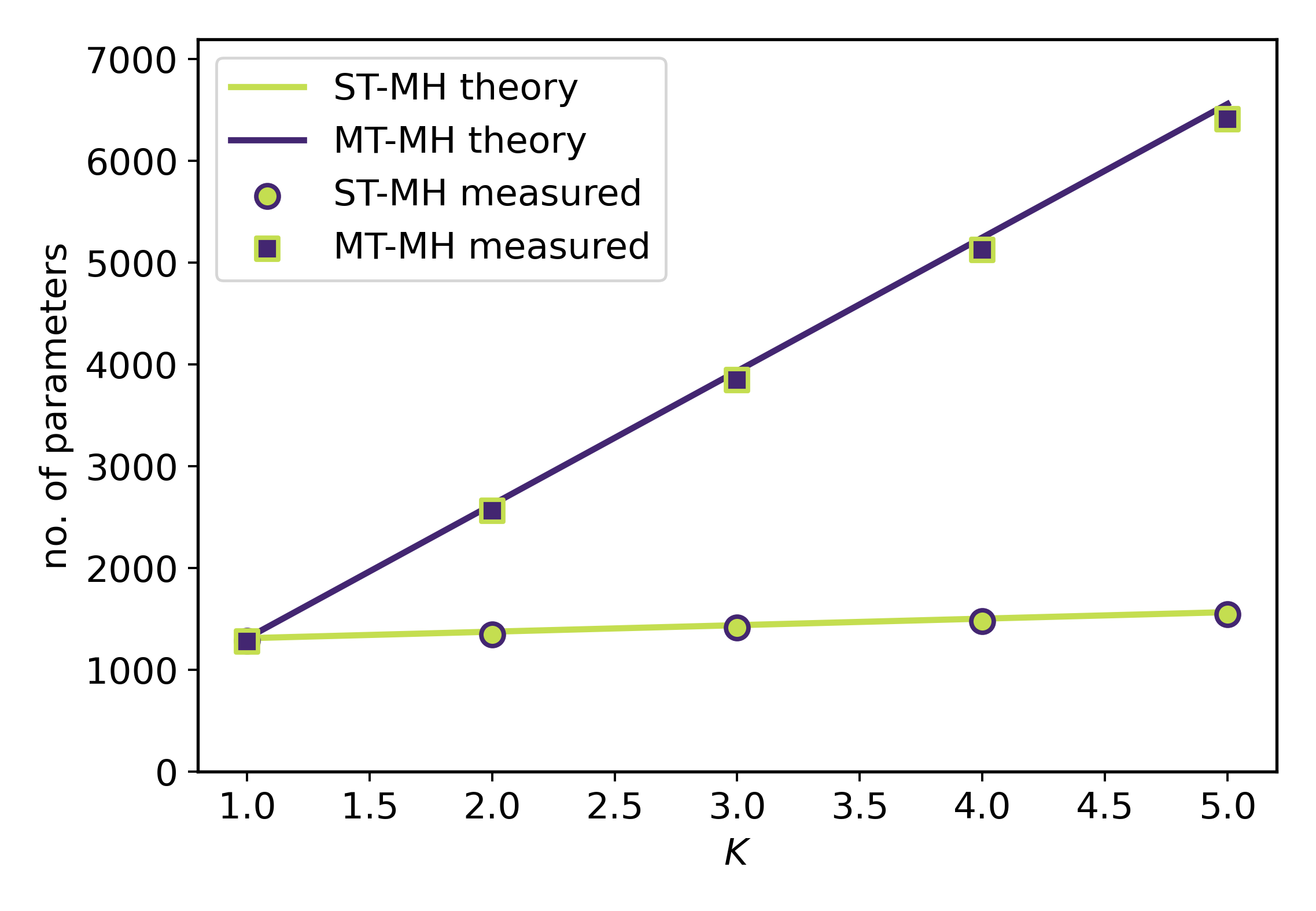}
    \caption{The number of parameters observed compared to the theoretical prediction for both the ST-MH and MT-MH NQS ensemble training for a network architecture composed of a MLP with 2 hidden layers and ReLU activations.}
    \label{fig:params_vs_k}
\end{figure}
\noindent
Here, the theory lines shown in the figure correspond to the estimates conducted in equations \eqref{eq:paramscountth} and \eqref{eq:paramscountth2} in Section \ref{sec:costandparamscount}. The substantially fewer number of parameters for the ST-MH NQS ensemble implies not only a smaller memory footprint but also computational time. This can be seen in Figure \ref{fig:runtime_vs_k}.

\begin{figure}[h]
    \centering
    \includegraphics[width=0.6\linewidth]{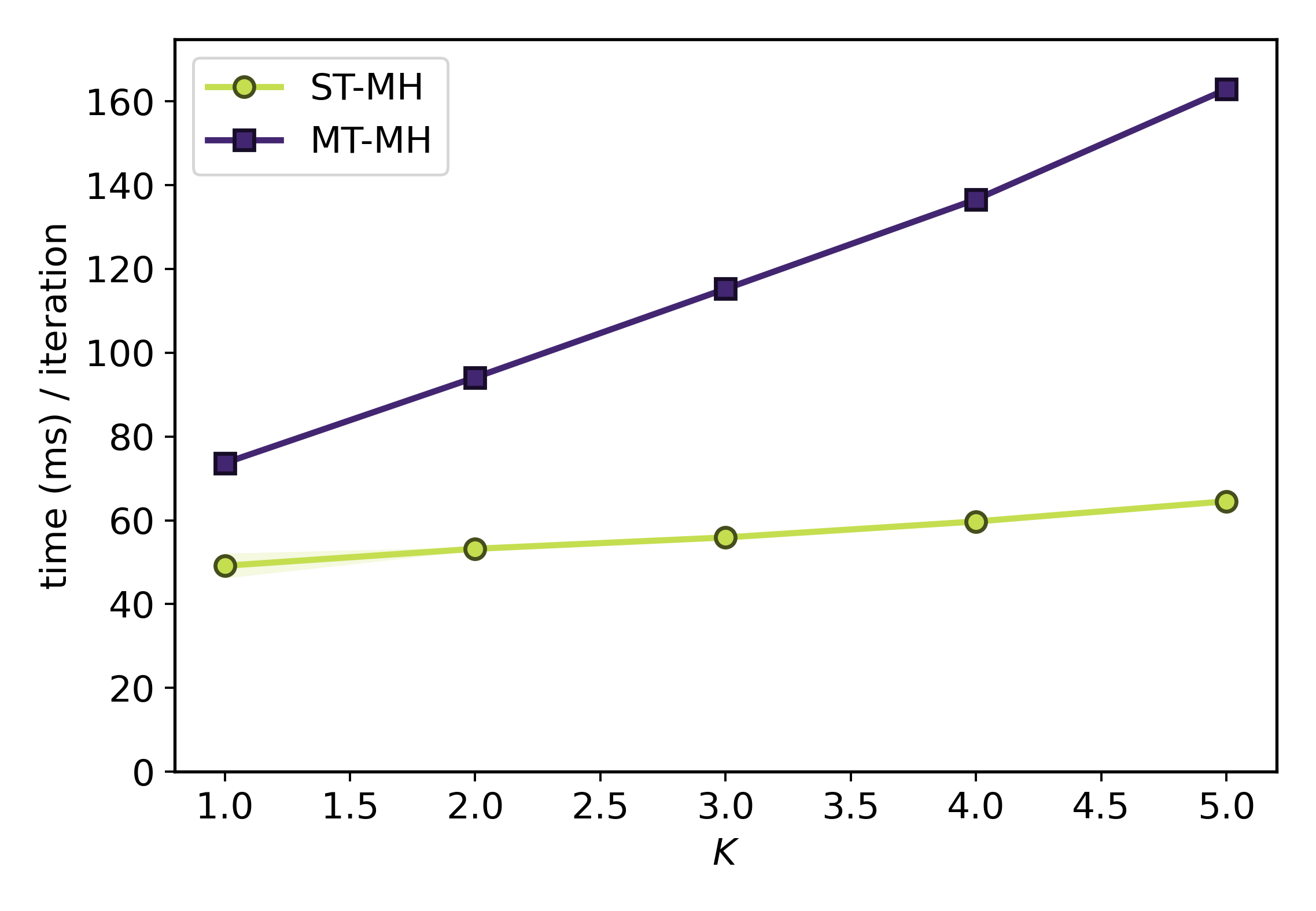}
    \caption{The average seconds-per-iteration time with respect to $K$ for both ST-MH and MT-MH ensembles with the sampler cost held fixed averaged over 3 simulations for each $K$, whereby each included 100 optimisation.}
    \label{fig:runtime_vs_k}
\end{figure}
\noindent
Figure \ref{fig:runtime_vs_k} shows the scaling of the average seconds-per-iteration with respect to different values of $K$. Here, the mean is computed over 3 simulations, each with 100 optimisation steps. The figure shows the case for a fixed sampler cost. Namely, both ensembles draw from a shared mixture (instead of having independent samplers per-head/state). The reason for this is to make the comparison explicit. For small networks/systems, the compute time required for sampling amortises that of the gradient computations, resulting in skewed results. 
\newpara
In this case, one sees that the ST-MH ensemble compute time is nearly flat with $K$ (only head and overlap compute resources grow) whereas MT-MH ensemble compute time increases roughly linearly due to the $K$-fold replication of the trunks. The residual slope in both curves is explained by the shared $O(K^2)$ pairwise-overlap computations in the penalty. More importantly, the smaller overall computational footprint for the ST-MH NQS ensemble does not come at a cost of accuracy, as can be seen in what follows.

\begin{figure}[h]
    \centering
    \includegraphics[width=0.6\linewidth]{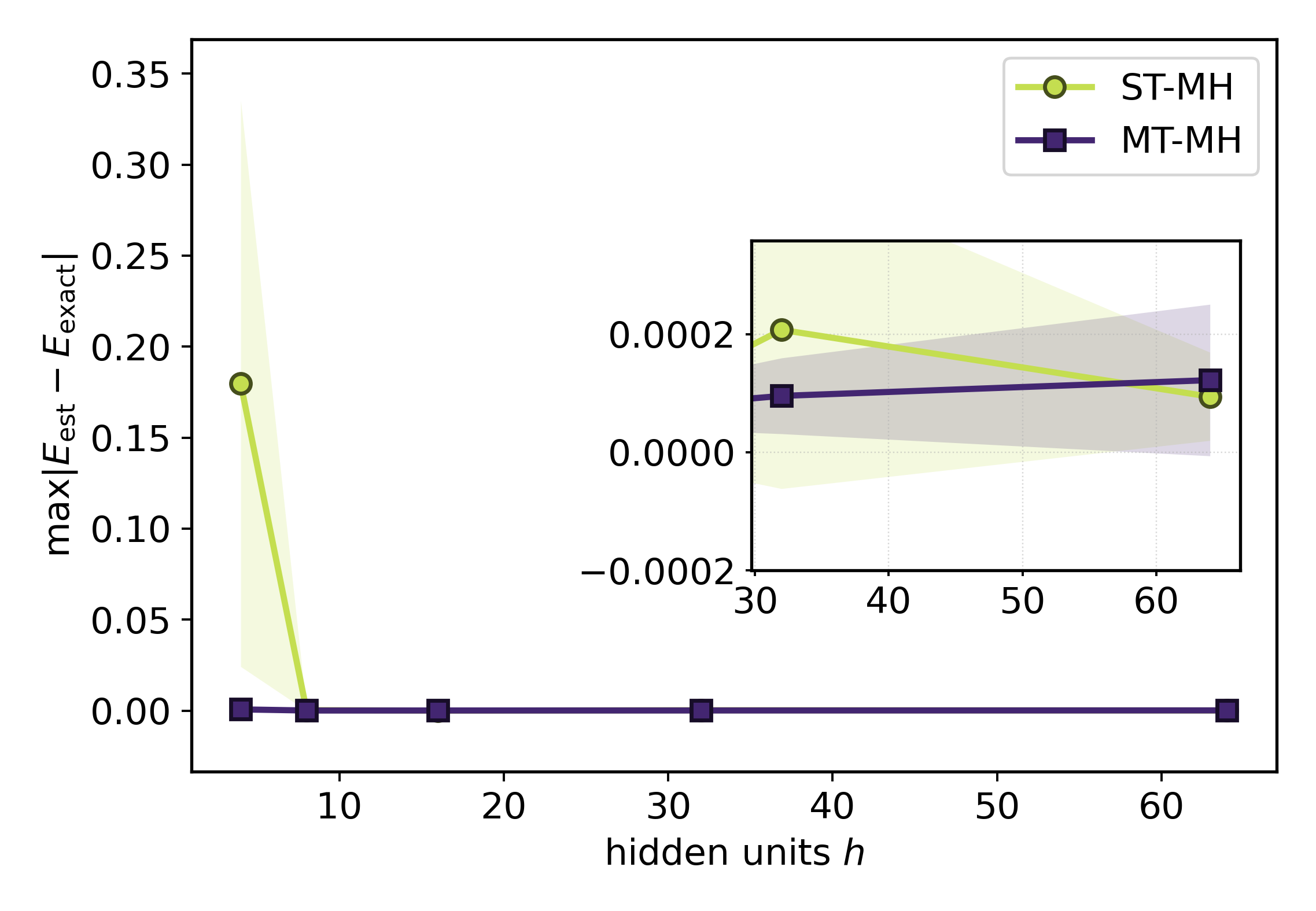}
    \caption{Maximum absolute energy error across heads as a function of hidden units is shown. Both ST-MH and MT-MH ensembles achieve high and similar accuracy especially for higher trunk width. The results are averaged over 3 simulations with 100 optimisation steps each.}
    \label{fig:error_vs_width}
\end{figure}
\noindent
Figure \ref{fig:error_vs_width} shows the maximum absolute energy error across heads as a function of hidden units $h$. The results are averaged over 3 simulations with 100 optimisation steps each. As shown, for relatively low $h$, the MT-MH approach has higher accuracy, unsurprisingly, due to the fact that each target degenerate eigenstate has $h$ trunk features in total to represent it. Comparatively, the $h$ trunk features are shared among all target degenerate eigenstates in the ST-MH case. This, however, is not an unavoidable hindrance, as is shown in the ablation studies in Section \ref{sec:extension}.
\newpara
Irrespectively, as $h$ increases, it is evident that the maximum absolute energy error of ST-MH quickly declines and further matches that of the MT-MH, indicating sufficient representability of the total target states with a fraction of the available feature space. The shown plateau in both cases is an indicator that the network has become expressive enough to saturate the accuracy allowed by the optimisation budget. One can also plot the obtained energy eigenvalue for both the heads in each ensemble as shown below.

\begin{figure}[h]
    \centering
    \includegraphics[width=0.6\linewidth]{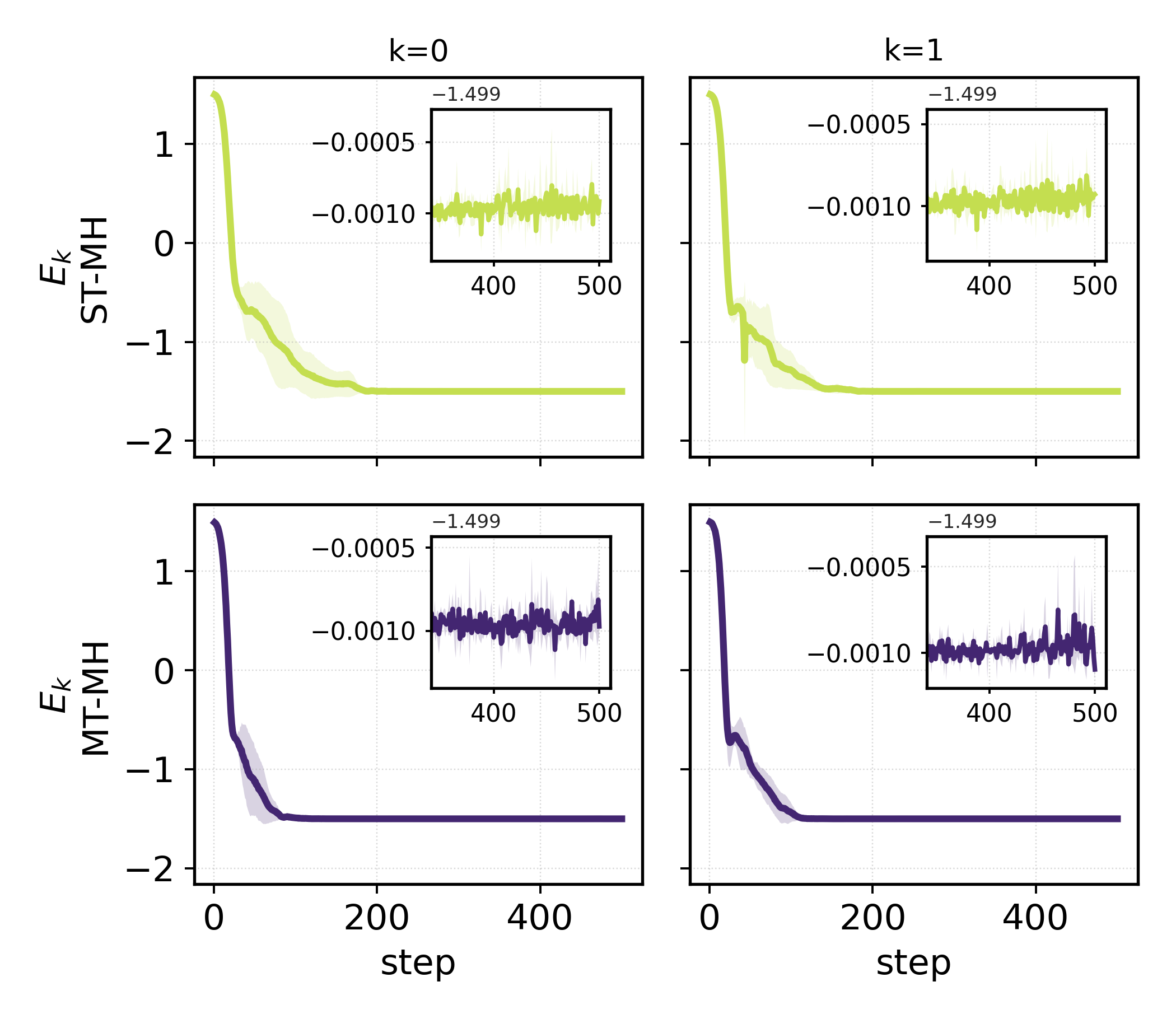}
    \caption{The evolution of the two eigenvalues $E_k$ obtained by the two heads for the case of ST-MH and MT-MH ensembles. The true ground energy is $E_0 = -1.5$. The results are averaged over 3 simulations with 500 optimisation steps each.}
    \label{fig:learning_E}
\end{figure}
\noindent
Figure \ref{fig:learning_E} shows a plot of the evolution of the two eigenvalues that both the two heads/networks in the ST-MH and MT-MH ensembles, respectively, converge to. Once more, the shown results are averages over 3 simulations, now each with 500 optimisation steps. As shown in the figure, both ensemble approaches converge quickly to the correct ground energy $E_0 = -1.5$ and stabilise near the same plateau, showing identical \emph{learning speed} despite their very different parameter counts. The fluctuations observed for both ensembles are minor and are attributed to simply optimisation noise. They are neither large enough to cause any discrepancy, nor large enough to cause any instability in training. Lastly, to demonstrate that both ensembles obtained different states, the evolution of the overlap matrix norm is shown in Figure \ref{fig:learning_S}.

\begin{figure}[h]
    \centering
    \includegraphics[width=0.6\linewidth]{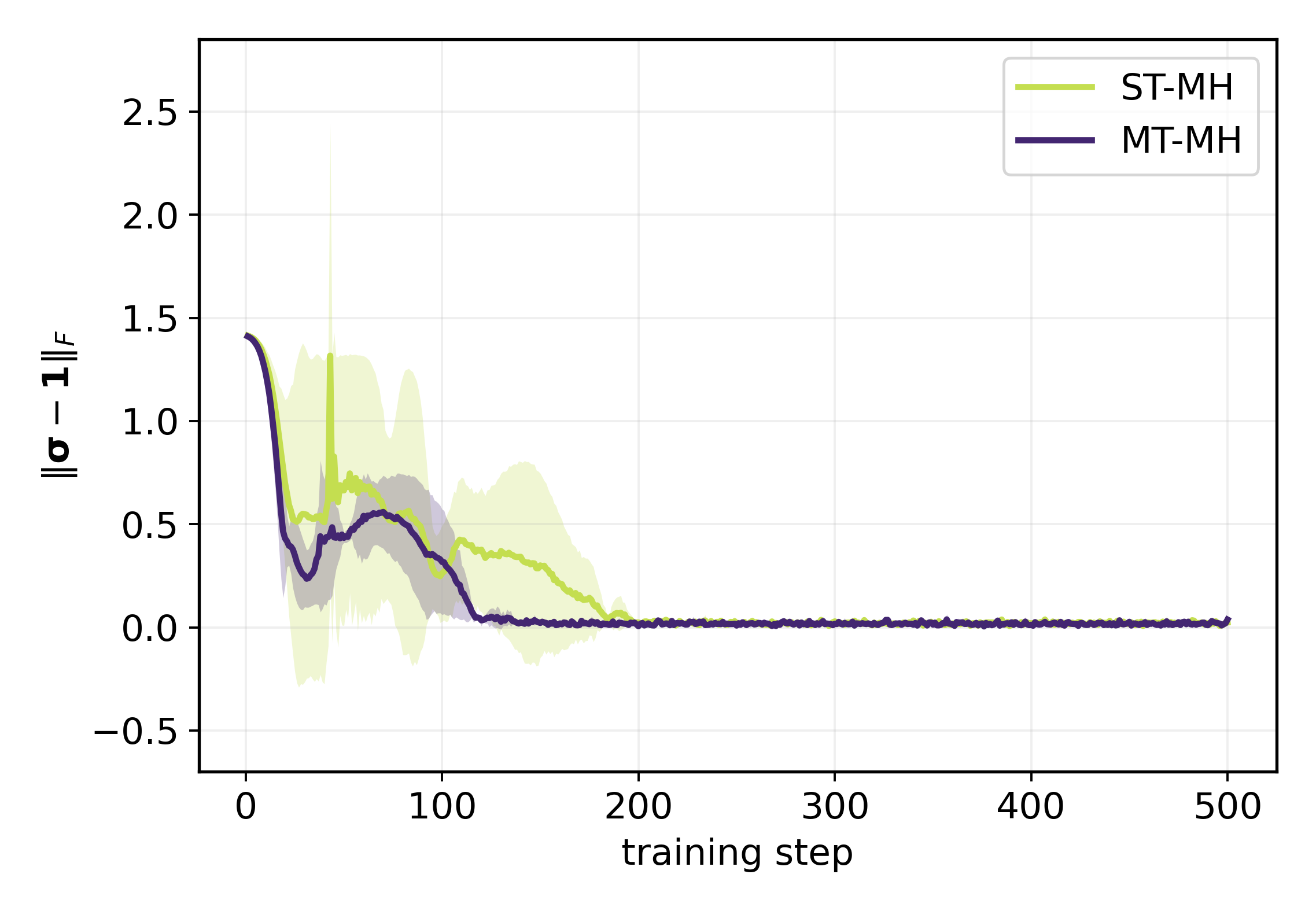}
    \caption{The Frobenius deviation $\lVert \sigma - \mathbf{1}\rVert_F$ is shown for both the ST-MH and MT-MH ensembles. Both ensembles rapidly reduce the deviation and settle near zero. The results are averaged over 3 simulations with 500 optimisation steps each.}
    \label{fig:learning_S}
\end{figure}
\noindent
Figure \ref{fig:learning_S} shows the evolution of the orthogonality metric, namely the Frobenius norm $\lVert \sigma - \mathbf{1}\rVert_F$, where $\mathbf{\sigma}$ is the normalised overlap matrix between heads and $\mathbf{1}$ is the identity. Note that here, the pairwise normalised overlap $\sigma_{kl}$, of which $\sigma$ is constructed from, is computed using the biased estimator outlined in Appendix \ref{app:methodology} equation \eqref{eqapp:estimatorsigma}. However, post-training (see Table \ref{tab:groundspace-metrics} below), an exact computation is carried out whereby the full-space is enumerated and pairwise-overlaps are computed explicitly. It was observed to be the case that the exact computation fell within low error ranges. 
\newpara
As shown in the figure, both ensembles converge to nearly 0, indicating that the two obtained normalised states are mutually orthogonal. Coupled with the fact that the two obtained states, in each ensemble approach, independently also arrived at an energy eigenvalue close to the true $E_0$, this is a \emph{soft} indicator which may allow one to conclude that the two are indeed different eigenstates in the ground space. In the following section, we ensure that this is indeed the case. Consequently, this implies that the orthogonality penalty did in fact force the states to be mutually orthogonal as intended.
\newpara
Lastly, we note that the compute time as a function of the system size $N$ was also studied for different chain lengths of even sites $N = 4, \cdots, 10$ and $K=2$. It was observed that ST-MH remains faster than MT-MH at all sizes, but both compute times grow modestly. This was true for both ensembles for both the shared sampler approach or the independent sampler approach where each head received its own sampler. 
\newpara
We note that in principle, for the MT-MH ensemble, there could be some trunk width $h$ which is smaller than that considered for the ST-MH ensemble but still converges to the solutions. In that case, the total parameter count and the total runtime will differ from what is presented here. However, as this specific trunk width may be difficult to know a priori, for the purpose of comparison we set both trunk widths to be the same for the ST-MH and MT-MH ensembles. A more detailed discussion on different trunk widths is presented in Section \ref{sec:cost} below.


\subsubsection{\label{sec:extension}Fidelity tests and full ground space resolution}

The purpose of this section is to demonstrate the capability of the ST-MH NQS ensemble to resolve the entire degenerate ground space. To test this, the NQS ensemble is used to resolve the degenerate ground space of once again the same frustrated $J_1-J_2$ Heisenberg chain at the MG point described in previous sections, but with various number of sites $N = 4, 6, 8$. We also present brief ablation studies for the case of $N=4$. To this end, we evaluate a series of diagnostics that jointly assess both the accuracy of individual heads as approximate eigenstates and the collective ability of the ensemble to span the full degenerate manifold. 
\newpara
First, post-training and using full enumeration (e.g. obtain the $k$\textsuperscript{th} obtained solution by using the optimised ST-MH ensemble to output the amplitudes for all basis states in the Hilbert space through head $k$), for each head we compute the expectation value of the Hamiltonian $\op{H}$ and its variance $\mathrm{Var}(\op{H})$. The expectation provides a direct comparison to the exact ground energy, while the variance quantifies how close the state is to being a true eigenstate. Next, we measure the fidelity of each head with the exact ground state subspace, as obtained from exact diagonalisation. The ground-subspace fidelity captures the fraction of the head's wave-function that resides within the true ground manifold. Averaging this quantity across all heads provides a compact insight into ensemble accuracy, while the spread across heads indicates how uniformly the states approach the ground space.
\newpara 
Lastly, in order to truly attest that the ensemble does not merely produce orthogonal states in the full Hilbert space but actually spans the ground subspace itself, we project the heads onto the exact ground space and analyse the resulting projection matrix. To formalise this, let $\Psi = [\psi_1 \,\,\,\dots\,\,\,\psi_K]$ denote the matrix of obtained solutions generated from the $K$ optimised heads (each column representing a normalised variational wave-function in the computational basis). Let $V_0 = [v_1\,\,\,\dots\,\,\,v_g]$ be the exact ground state eigenvectors obtained by exact diagonalisation, spanning the degenerate manifold of dimension $g$. The orthogonal projector onto the ground space is then $P_{\mathcal{G}} = V_0 V_0^\dagger$. With this notation, we define the projection matrix
\begin{equation}
    \mathscr{C} = V_0^\dagger \Psi,
\end{equation}
whose entries encode the overlaps $\langle v_i | \psi_k \rangle$ between the exact ground basis and the learned heads. Several diagnostics are derived from $C$:
\begin{itemize}
    \item \emph{Ground-subspace fidelity:} for a head $\psi_k$, the fidelity is given by
    \begin{equation}
        F_k = \langle \psi_k | P_\mathcal{G} |\psi_k\rangle,
    \end{equation}
    namely the squared norm of its projection onto the ground manifold. Values lie in $[0,1]$, with $F_k=1$ signifying perfect confinement to the ground subspace and $F_k=0$ indicating full leakage into excited states.

    \item \emph{Pairwise overlaps:} the head overlap matrix is defined as
    \begin{equation}
        \sigma_{kl} = \langle \psi_k | \psi_l \rangle,
    \end{equation}
    and computed via exact computation \emph{not} the biased estimator as done during training and shown in Figure \ref{fig:learning_S}. Ideally, $\sigma$ approaches the identity, reflecting mutual orthogonality of the obtained states. Deviations, as done previously, are quantified by the Frobenius norm $\lVert \sigma - I \rVert_F$, where values near zero indicate nearly orthogonal solutions. 

    \item \emph{Singular values and principal angles:} performing singular-value decomposition (SVD) of the projection matrix, $\mathscr{C} = U \Sigma W^\dagger$, yields non-negative singular values $\sigma_i \in [0,1]$ whose cosines are the principal angles between the subspace spanned by the ensemble $\Psi$ and the exact ground space. Several conclusions can be drawn:
    \begin{enumerate}
        \item $\sigma_i = 1$ (angle $0^\circ$) means that direction is perfectly captured,

        \item $\sigma_i = 0$ (angle $90^\circ$) means that direction is entirely missing. Intermediate values quantify partial coverage.

        \item Rank condition: if $\mathrm{rank}(\mathscr{C}) = g$, then all directions of the ground space are in principle represented by the ensemble. Missing rank indicates that some eigenstates are absent.

        \item Condition number: the ratio $\kappa = \sigma_{\max}/\sigma_{\min}$ characterises numerical stability. A moderate $\kappa$ implies the ensemble spans the manifold in a well-conditioned manner. Large $\kappa$ values reflect near-linear dependence of the projected heads, meaning some directions are only weakly resolved even if rank is complete.
    \end{enumerate}

    \item \emph{Effective dimension $d_\mathrm{eff}$:} the effective dimensions are the dimensions of the ground space which we considered as truly resolved. This is accounted for by counting the number of singular values of the SVD decomposed projection matrix which are $\geq 0.99$.
\end{itemize}
Together, these metrics provide a layered diagnostic: energy and variance measure individual accuracy, fidelities confirm that each state lies within the ground space, the overlap matrix quantifies orthogonality and the SVD of $\mathscr{C}$ attests that the ensemble as a whole spans the degenerate ground manifold with controlled conditioning.

\begin{table*}[h]
\centering
\small
\setlength{\tabcolsep}{4pt}
\renewcommand{\arraystretch}{1.15}
\caption{\label{tab:groundspace-metrics}Representative diagnostics (across 3 simulations) of ST-MH ensembles for resolving the degenerate ground space of the frustrated $J_1-J_2$ Heisenberg chain at the MG point with $N=4, 6, 8$ sites. Listed are the ground state degeneracy $g$, number of ensemble heads $K$, exact and average learned head-energies $E_0$ and $\bar{E}$, maximum per-head quantum variance of the full Hamiltonian at the final iterate across all heads, mean and minimum ground-subspace fidelity $F_\mathrm{mean}, F_\mathrm{min}$, rank of the projection matrix $\mathscr{C}$ relative to $g$, smallest singular value $\sigma_{\min}$, condition number $\kappa$, Frobenius deviation of the head overlap matrix from the identity, and the lowest effective dimension $d_{\mathrm{eff}}$ resolved by the ensemble in 3 conducted simulations as an independent column. Ablation studies for the $N = 4$ case are marked with (B) and (C).}
\resizebox{\textwidth}{!}{
\begin{tabular}{cccccccccccc|c}
\hline
$N$ & $g$ & $K$ & $E_0$ & $\bar{E}$ & $\max\mathrm{Var}(\hat{H})$ & $F_{\mathrm{mean}} \pm \mathrm{std}$ & $F_{\min}$ & $\mathrm{rank}(C)/g$ & $\sigma_{\min}(C)$ & $\kappa(C)$ & $\|\sigma - \mathbf{1}\|_F$ & $d_{\mathrm{eff}}$ \\
 &  &  &  &  & ($\times 10^{-3}$) &  &  &  &  &  &  &  \\
\hline
4 & 2 & 2 & $-1.5000$ & $-1.5000$ & 0.035 & $0.999991 \pm 2.50\times 10^{-6}$ & $0.999988$ & 2/2 & $0.999991$ & 1.000 & 0.004736 & 2 \\ 
4(B) & 2 & 2 & $-1.5000$ & $-1.4999$ & 0.574 & $0.999928 \pm 5.35\times 10^{-5}$ & $0.999874$ & 2/2 & $0.999933$ & 1.000 & 0.008485 & 2 \\
4(C) & 2 & 2 & $-1.5000$ & $-1.5000$ & 0.076 & $0.999967 \pm 3.00\times 10^{-6}$ & $0.999964$ & 2/2 & $0.999976$ & 1.000 & 0.006870 & 2 \\
6 & 2 & 2 & $-2.2500$ & $-2.2492$ & 1.612 & $0.999319 \pm 1.01\times 10^{-4}$ & $0.999219$ & 2/2 & $0.999524$ & 1.000 & 0.004153 & 2 \\
8 & 2 & 2 & $-3.0000$ & $-2.9977$ & 6.143 & $0.998158 \pm 6.56\times 10^{-4}$ & $0.997502$ & 2/2 & $0.998601$ & 1.001 & 0.080933 & 2 \\
\hline
\end{tabular}
}
\end{table*}
\noindent
As seen in Table \ref{tab:groundspace-metrics}, across various system sizes of $N = 4, 6, 8$, the diagnostics indicate good convergence\footnote{Exact simulation setup parameters used are shown in Table \ref{tab:syssetup} in Appendix \ref{app:syssetup}} to the two degenerate momentum eigenstates. Across three seeds per $N$, the per-simulation mean energies are highly consistent (maximum standard deviation of $\approx 10^{-3}$ for $N = 6$). Likewise, the mean ground subspace fidelities were observed to also be stable (maximum standard deviation  $\approx 10^{-3}$ for $N = 6$) with a maximum standard deviation in the spread of $\approx 10^{-3}$ for $N = 6$). We therefore show a single representative per run $N$ for conservatism. The reported effective dimension resolved $d_\mathrm{eff}$ is the lowest effective dimension across all three seeds.
\newpara
The average head energies closely reproduce the exact ground energy $E_0$, with maximum per-head quantum variance of the full Hamiltonian at the final iterate across all heads in the order of $10^{-3}$, confirming that the states are nearly perfect eigenstates. Fidelities exceed $0.999$ uniformly across all heads, demonstrating accurate restriction to the ground manifold. The projection matrices have full rank with well-conditioned singular spectra, and the pairwise overlap norms remain well below $0.01$. Taken together, these indicate that for these smaller systems the ensemble not only reproduces each ground state individually, but also spans the degenerate subspace in a numerically stable manner.
\newpara
For even $N \geq 10$, convergence was relatively hard to achieve for both heads, likely arising from a combination of factors. For example, one observed effect is the penalty schedule for encouraging head diversification can interact with network capacity: if enforced too strongly or too early, heads may stabilise into nearly redundant solutions rather than exploring orthogonal directions. Conversely, if imposed too weakly, they may fail to separate before convergence to the ground manifold. Finally, the effective expressive power of the network architecture may begin to saturate at $N \geq 10$. Although this is less likely (as we have achieved a few good convergence simulations for $N = 10$ but are not shown due to being difficult to reproduce across seeds), it remains a relevant concern. In the ST-MH NQS ensemble, all target states share a trunk. Therefore, if a simple architecture converges in the MT-MH case, this does not mean that the same must be true if used in the ST-MH approach. As the purpose of this numerical verification is to only provide a proof-of-principle on the applicability of the ST-MH NQS ensemble, we do not further attempt to fully verify specific reasons for this behaviour observed in this physical model.
\newpara
Nevertheless, we do conduct a very simple ablation study to demonstrate the interplay between the network capacity and the simulation parameters. Table \ref{tab:groundspace-metrics} shows a total of three entries for $N = 4$, two of which are labelled by (B) and (C) respectively. As shown in Table \ref{tab:syssetup} in Appendix \ref{app:syssetup}, the simulation 4(C) was conducted with a trunk width of only 4 (compared to standard trunk width of choice of 32 for the data shown for $N = 4$ in Table \ref{tab:groundspace-metrics}). Further, for $N = 4$(B), the trunk width is only 2. Despite that, Table \ref{tab:groundspace-metrics} shows good convergence for both cases. While this shows the robustness of the NQS ansatz in general and the ST-MH NQS ensemble specifically to parametrise more than one solution with a relatively small network, the change of simulation parameters required for this (shown in Table \ref{tab:syssetup}) highlights the delicate interplay between the network capacity and the simulation parameters. Hence, this further solidifies the argument presented for the difficulty of convergence for $N \geq 10$. 
\newpara
Note that for this model, the two momentum ground states have flat modulus on the common support, therefore $\mathcal{R}_G = \mathrm{span}\lbrace \mathbf{1}_\suppPsi \rbrace$ and hence $r_G = 1$. For the phases, they coincide on $\mathrm{supp}\Phi_A \setminus \mathrm{supp}\Phi_B$ and differ by $\pi$ on $\mathrm{supp}\Phi_B \setminus \mathrm{supp}\Phi_A$. Since $\mathbf{1}_{\mathrm{supp}\Phi_B}$ is independent of $\mathbf{1}_\suppPsi$, this gives $\dim \mathrm{span}\lbrace \mathbf{1}_\suppPsi, \Omega_+, \Omega_-\rbrace = 3$ and hence $r_\mathrm{both} = 3$ and $h_\mathrm{both}^\star = 2$. The ablation results, specifically the case of 4(B) shown in the Table \ref{tab:groundspace-metrics} above, empirically supports the theorem's assertion that with a minimal width of $h = 2$, using the ST-MH NQS ensemble one can indeed resolve the entire degenerate ground manifold.


\subsection{\label{sec:cost}Cost, efficiency, and when to prefer ST-MH}

Both ST-MH and MT-MH ensembles converge to the same set of $K$ degenerate eigenstates, provided the network is expressive enough. The practical question is therefore which architecture reaches that goal with less runtime and memory. In what follows, we provide a \emph{qualitative} computational cost analysis for a simple network architecture and outline a threshold for the efficiency of the ST-MH approach.


\subsubsection{\label{sec:costandparamscount}Costs and parameter count}
Consider the following \emph{qualitative} cost model. Assume that the input to any network considered is $N$ (e.g. the number of sites in the lattice) and that the trunk is composed of two layers and has a final output features vector of size $h_{(s/m)}$. Here, the $(s/m)$ denotes whether we are considering the width for the ST-MH (denoted $s$) or the MT-MH (denoted $m$) trunk. For clarity, we only count matrix-vector multiplications and absorb activation functions and constant factors from automatic differentiation libraries in prefactors. For one forward pass, the floating point operations (FLOPs) count is
\begin{equation}
    F_T^{(s/m)} = (Nh_{(s/m)} + h^2_{(s/m)}) \quad\text{multiplications},
\end{equation}
arising from multiplying an $N \times h_{(s/m)}$ weight matrix by an $N$-vector ($Nh_{(s/m)}$ multiplications, input to first hidden layer), then multiplying a $h_{(s/m)} \times h_{(s/m)}$ weight matrix by the latent vector ($h^2_{(s/m)}$ multiplications, hidden to hidden). In principle, there would be one more last (linear readout) operation, consisting of multiplying an $h_{(s/m)} \times 1$ weight vector by the $h_{(s/m)}$-dimensional latent vector ($h_{(s/m)}$ multiplications, hidden to scalar). This is excluded from the trunk forward pass, as shown above, and is accounted for later in the head FLOPs count below.
\newpara
A backward pass (reverse-mode automatic differentiation) through the network costs almost the same order of magnitude. In principle, we replay every operation in the forward pass and additionally compute a matrix-vector product (vector-Jacobian product) per weight matrix to form gradients. Empirically, the rule of thumb is that it may amount to roughly $2\times$ the cost of a forward pass. Therefore, we can set
\begin{equation}
    B_T^{(s/m)} \simeq 2 F_T^{(s/m)}.
\end{equation}
The \emph{head} (a single linear readout that produces two scalars for phase and amplitude) costs
\begin{equation}
    F_H^{(s/m)} = 2h_{(s/m)}, \qquad B_H^{(s/m)} \simeq 2F_H^{(s/m)}.
\end{equation}
For the ST-MH ensemble, for $N_{MC}$ Monte Carlo samples in one gradient update, the compute cost is
\begin{equation}
\label{eq:coststmh}
    C_{\mathrm{ST-MH}} = N_{MC}(F_T^{(s)} + B_T^{(s)} + K(F_H^{(s)} + B_H^{(s)})) \approx N_{MC}(3F_T^{(s)} + 6Kh_{(s)}).
\end{equation}
Since $F_T^{(s/m)} \propto Nh_{(s/m)} + h^2_{(s/m)}$ while the head term scales like $Kh_{(s/m)}$, the trunk dominates when $3F_T^{(s/m)} \gg 6Kh_{(s/m)} \Leftrightarrow Nh_{(s/m)} + h^2_{(s/m)} \gg 2Kh_{(s/m)}$ (i.e. $K \ll (N + h_{(s/m)})/2$). In this regime, 
\begin{equation}
\label{eq:costmtmh}
    C_{\mathrm{ST-MH}} \approx 3N_{MC} F_T^{(s)}.
\end{equation}
For a MT-MH ensemble, the trunk is duplicated $K$ times, so
\begin{equation}
    C_{\mathrm{MT-MH}} \approx 3KN_{MC}F_T^{(m)}.
\end{equation}
\emph{If} $h_{(s)} = h_{(m)}$, this results in a $K$-fold increase in both compute time and memory footprint as can roughly be seen in Figures \ref{fig:params_vs_k} and \ref{fig:runtime_vs_k}. The number of trainable parameters scales in the same manner. Namely
\begin{align}
\label{eq:paramscountth}
    P_{\mathrm{ST-MH}} &\propto (N+1)h_{(s)} + h^2_{(s)} + 2Kh_{(s)},
    \\
\label{eq:paramscountth2}
    P_{\mathrm{MT-MH}} &\propto K ((N+1)h_{(m)} + h^2{(m)} + 2h_{(m)}).
\end{align}
Therefore, \emph{if} $h_{(s)} = h_{(m)}$, then $P_{\mathrm{MT-MH}} \approx K P_{\mathrm{ST-MH}}$.
\newpara
In \emph{both} ensembles, evaluating the overlap penalty $\mathcal{P}$ and its gradients requires all pairwise normalised overlaps $\sigma_{kl}$, which introduces an additional head-only $O(K^2)$ term. A simple estimate is
\begin{equation}
    C_{\mathrm{pen.}}^{(s/m)} \approx cN_{MC} K(K-1),
\end{equation}
where $c$ collects small constants from the importance sampling estimators in Appendix \ref{app:methodology}. This term is common to both ST-MH and MT-MH, and is typically subdominant when trunks dominate cost. However, for large $K$ and/or large $N_{MC}$ it can become comparable to the trunk cost and should be accounted for in wall-time predictions.


\subsubsection{Efficiency threshold}
ST-MH ensembles can \emph{represent} $K$ orthogonal states with a single latent dimension $h_{(s)}$ as long as $h_{(s)} + 1 \geq r_\mathrm{both}$\footnote{Theorem \ref{thm:sufficiency}'s assertion is a representability statement on a finite support. It does not claim any expressivity, learnability or parameter efficiency assumptions (see Remark \ref{rem:remark3})}, where $r_\mathrm{both}$ is the combined linear rank of the states' log-moduli and phases on the common finite support (see Appendix  \ref{app:sufficiencytheorem}). Because $r_G \le \min\{D+1,|S|\}$ and, for simultaneous moduli and phases, $r_{\mathrm{both}} \le \min\{2D+1,|S|\}$, choosing $h_{(s)} \ge r_{\mathrm{both}}-1$ ensures exact \emph{representability} without unnecessarily inflating the trunk. Since memory co-scales under our simple model with the compute time, ST-MH ensembles are strictly preferable unless one deliberately wishes to eliminate trunk parameter sharing (e.g. to avoid latent space interference).
\newpara
Note that $K$ can exceed $N$. In such cases, if $K \gg N$ the required rank $r_\mathrm{both}$ may also grow with $K$, meaning that ST-MH trunks must be correspondingly wider. This means that while ST-MH remains representationally sufficient whenever $r_\mathrm{both} \leq h + 1$, the efficiency advantage over MT-MH starts to diminish when degeneracy scales \emph{extensively} compared to $N$. However, to obtain an empirical threshold, consider once more the costs in equations \eqref{eq:coststmh} and \eqref{eq:costmtmh} and further include the dropped head terms $6Kh_{(s/m)}$. Now assume that the MT-MH ansatz can represent each state with a trunk width $h_{(m)}$ and further, $h_{(s)} \neq h_{(m)}$. Now, solving
\begin{align}
    C_\mathrm{ST-MH} & \leq C_\mathrm{MT-MH} 
    \\
    Nh_{(s)} + h^2_{(s)} + 2Kh_{(s)} & \leq K(Nh_{(m)} + h^2_{(m)}) + 2Kh_{(m)},
\end{align}
for $h_{(s)}$ one can obtain a precise threshold
\begin{equation}
\label{eq:threshold}
    h_{(s)}^\star = \frac{-(N + 2K) + \sqrt{(N + 2K)^2 + 4K(Nh_{(m)} + h_{(m)}^2 + 2h_{(m)})}}{2}.
\end{equation}
This shows that if $h_{(s)} \in [0, h_{(s)}^\star]$, the ST-MH approach is cheaper or equal to the MT-MH. Otherwise, the MT-MH is computationally cheaper. Precisely, one can write a slowdown factor
\begin{equation}
    R(h_{(s)}) = \frac{C_\mathrm{ST-MH}(h_{(s)})}{C_\mathrm{MT-MH}(h_{(m)})},
\end{equation}
where if $R(h_{(s)}) < 1$, the ST-MH ensemble is faster by a factor of $1/R(h_{(s)})$ while if $R(h_{(s)}) > 1$, it is slower by a factor of $R(h_{(s)})$. 
\newpara
In this qualitative model, we assume a 2-hidden layer fully connected trunk with equal hidden widths $h_{(s/m)}$. The threshold above isolates the \emph{last layer} feature sizes $h_{(s/m)}$ as the deciding knobs: for a given per-head budget $h_{(m)}$, ST-MH is preferable whenever the shared last layer width $h_{(s)}$ stays below the break-even value in equation \eqref{eq:threshold}. In practice, however, $h_{(s)}$ and $h_{(m)}$ are just the dimensions of the trunks' \emph{final} feature vector. The preceding layers can be deep and wide. For an arbitrary fully connected network, one should replace $F_T$ by the layer-wise sum $\sum_l d_{l-1}d_l$ (and keep the head term) where $d_i$ is the size of the layer $i$ and $d_0 = N$. For different architectures (e.g. convolutional networks), the accounting is architecture specific. A similar substitution uses the per-layer FLOP counts  (e.g. for convolutional networks, kernel height $\times$ kernel width $\times$ in-channels $\times$ out-channels $\times$ spatial positions), with the same head term unchanged.
\newpara
In this simple cost model, the qualitative message survives: efficiency hinges on the compactness of the shared representation, but the numerical threshold shifts with architectural choices upstream of the last layer. Therefore, while the equation says the sizes $h_{(s/m)}$  decide, in reality whether a given architecture achieves a small effective $h_{(s)}$ without inflating earlier layers is ultimately an empirical question which our simple cost model only approximates. Note that very large $h$ can make the Monte Carlo estimator of the natural gradient matrix noisy, and may require a larger amount of samples $N_{MC}$ or stronger regularisers. In that regime, the theoretical advantage may shrink.


\section{\label{sec:conclusion}Conclusion}

This work presents a unified and systematic approach for training NQS ensembles to capture degenerate ground states using variational Monte Carlo. By formulating exact gradient expressions and leveraging a linear rank based expressivity theorem, we demonstrated that the single-trunk multi-head (ST-MH) ensemble can represent all degenerate states exactly whenever the trunk width satisfies $h + 1 \geq r_\mathrm{both}$, where $r_\mathrm{both}$ is the combined linear rank of the states' log-moduli and phases on a common support in the degenerate target manifold where all states are non-vanishing.
\newpara
When this criterion is met, ST-MH can offer a, in some situations, substantial computational and memory advantage over the multi-trunk multi-head (MT-MH) ensemble, with no loss in accuracy. Our numerical experiments for the frustrated spin-$\tfrac{1}{2}$ $J_1-J_2$ Heisenberg chain on a periodic ring with even number of sites, considered at the Majumdar-Ghosh point, confirm the theoretical cost advantage expectations. The conducted full degenerate ground space resolution, including fidelity tests, orthogonality and complete spanning of the true degenerate ground manifold further strengthen the applicability of this ensemble ansatz, all the meanwhile reducing runtime and the required memory resources relative to an MT-MH ensemble with the same trunk width per-head would require. Through ablation studies, we provide empirical support for the theorem's assertion that the minimal trunk latent width $h^\star_\mathrm{both} = 2$ (for the considered model) needed to resolve the full degenerate ground state manifold is indeed true by converging to the two momentum eigenstates with a trunk of width $h = 2$.
\newpara
A qualitative computational cost analysis for a simple 2-hidden layer fully connected network shows that one can (in theory) provide a threshold on the ST-MH trunk width whereby the ST-MH approach offers significant increase in efficiency compared to the MT-MH approach. These findings establish practical and scalable guidelines for selecting NQS architectures in multi-state quantum learning tasks, highlighting ST-MH as potentially a preferred approach in settings where expressivity conditions are met.
\newpara
Moreover, while the ST-MH approach further extends the expressivity of the NQS ansatz by showing it is robust enough to represent many states with one network, this architectural separation between shared features and linear heads may also prove valuable beyond single-system degenerate eigenspaces, such as in transfer learning situations using foundation neural networks \cite{Rende2025} where a single trunk may be able to learn general representations adaptable across multiple quantum systems.


\section{\label{sec:ack}Acknowledgments}
The author thanks Hanno Sahlmann for the helpful comments and conversations which helped bring this work to its current form. The author gratefully acknowledges the scientific support and HPC resources provided by the Erlangen National High Performance Computing Center (NHR@FAU) of the Friedrich-Alexander-Universität Erlangen-Nürnberg (FAU). The hardware is funded by the German Research Foundation (DFG).

\appendix


\section*{References}

\clearpage


\section*{Appendices}

\renewcommand{\thesection}{\Alph{section}}

\setcounter{footnote}{0}


\section{\label{app:sufficiencytheorem}Representability theorem}

Let $\mathcal{C}$ be a finite configurations (e.g. spin configurations) set of size $M$. It is convenient to identify real-valued functions on $\mathcal{C}$ with vectors in $\mathbb{R}^M$ by fixing an ordering $\mathcal{C} : = \lbrace x_1, \cdots, x_M \rbrace$ and identifying real functions $A$ on $\mathcal{C}$ with their evaluation (column) vectors such that
\begin{equation}
    A : \mathcal{C} \rightarrow \mathbb{R} \Leftrightarrow \ev{\mathcal{C}}{A} := (A(x_1), \cdots, A(x_M))^\top \in \mathbb{R}^{M \times 1},
\end{equation}
which is simply a linear isomorphism. Now suppose we are given $D$-many target complex-valued wave-functions $\lbrace \Psi^{(j)} : \mathcal{C} \rightarrow \mathbb{C}\rbrace_{j = 1}^D$. For any $x \in \mathcal{C}$, and when $\Psi^{(j)} \neq 0$, one can write 
\begin{equation}
    \Psi^{(j)}(x) = \expfunc{G_{\Psi^{(j)}}(x) + \ii\Omega_{\Psi^{(j)}}(x)},
\end{equation}
where \begin{equation}
    G_{\Psi^{(j)}}(x) := \ln|\Psi^{(j)}(x)| \farcomma \Omega_{\Psi^{(j)}}(x) := \arg\Psi^{(j)}(x).
\end{equation}
When $\Psi^{(j)}(x) = 0$, then the corresponding $G_{\Psi^{(j)}}(x)$ is undefined. Therefore, let
\begin{equation}
    \mathcal{C} \supseteq \mathscr{S} := \bigcap_{j = 1}^D \mathrm{supp} \Psi^{(j)} \farcomma \mathrm{supp}\Psi^{(j)} := \lbrace x \in \mathcal{C} \,:\, |\Psi^{(j)}(x)| > 0\rbrace,
\end{equation}
be denoted the \emph{common support}. Note that from this point onward, $\mathscr{S}$ \emph{is assumed to have a fixed ordering} and hence evaluation vectors can be identified to functions on $\mathscr{S}$. Configurations with $\Psi(x) = 0$ are outside the domain of $G_\Psi$ and are not representable exactly by this exponential ansatz. A $\epsilon$-regularisation yields approximation but not identity at such points. From here onward, all statements in this appendix concern pointwise equality on $\suppPsi$.
\newpara
Note that the target log-moduli and phases $G_{\Psi^{(j)}}$ and $\Omega_{\Psi^{(j)}}$ can therefore be associated with their evaluation vectors $\ev{\mathscr{S}}{G_{\Psi^{(j)}}}$ and $\ev{\mathscr{S}}{\Omega_{\Psi^{(j)}}}$, respectively, on the common support. To keep the notation compact, we will often interchange the two.

\begin{definition}
\label{def:rank}
    Let $\suppPsi := \bigcap_{j = 1}^D \mathrm{supp}\Psi^{(j)}$ and fix single-valued phase branches $\{\Omega_{\Psi^{(j)}}\}_{j=1}^D$. Define the linear modulus and phase spans
    \begin{align}
    \mathcal{R}_G & := \mathrm{span} \left\lbrace \mathbf{1}_\mathscr{S} , \ev{\mathscr{S}}{G_{\Psi^{(1)}}}, \cdots, \ev{\mathscr{S}}{G_{\Psi^{(D)}}}\right\rbrace \subset \mathbb{R}^{|\suppPsi|} ,
    \\
\mathcal{R}_\Omega & := \mathrm{span}\lbrace\mathbf{1}_\mathscr{S}, \ev{\mathscr{S}}{\Omega_{\Psi^{(1)}}},\ldots,\ev{\mathscr{S}}{\Omega_{\Psi^{(D)}}}\rbrace\subset \mathbb{R}^{|\suppPsi|},
\end{align}
where $\mathbf{1}_\mathscr{S}(x) = 1$, and define
\begin{equation}
\mathcal{R}_{\mathrm{both}} := \mathrm{span}\bigl(\mathcal{R}_G \cup \mathcal{R}_\Omega\bigr).
\end{equation}
    Write $r_G=\dim \mathcal{R}_G$, $r_\Omega=\dim \mathcal{R}_\Omega$ as the linear modulus and linear phase ranks respectively, and $r_{\mathrm{both}}=\dim \mathcal{R}_{\mathrm{both}}$.
    \\
\end{definition}
\noindent
Recall that for the ST-MH ansatz, one has one differentiable trunk $f_\vartheta : \mathscr{S} \rightarrow \mathbb{R}^{h \times 1}$ which can be written component-wise as $(f_{\vartheta, 1}(x), \cdots, f_{\vartheta, h}(x))^\top$. Here, $f_{\vartheta, i} : \mathscr{S} \rightarrow \mathbb{R}$ is scalar at each $x \in \mathscr{S}$ and its evaluation vector $\ev{\mathscr{S}}{f_{\vartheta, i}}$ lies in $\mathbb{R}^{|\mathscr{S}|}$. For head $k$ ($k = 1, \cdots, K$), real vectors $\alpha_k, \varphi_k \in \mathbb{R}^{1\times h}$ and scalars $\beta_k, \gamma_k \in \mathbb{R}$, define the complex vectors
\begin{equation}
\label{eqapp:chiandc}
    \chi_k := \alpha_k + \ii\varphi_k \in \mathbb{C}^{1 \times h} \farcomma c_k := \beta_k + \ii \gamma_k \in \mathbb{C}.
\end{equation}
The single-trunk multi-head wave-function of head $k$ is then, for all $x \in \mathscr{S}$, given by
\begin{equation}
\label{eqnapp:psik}
    \psi_k(x) = \expfunc{\chi_k f_\vartheta(x) + c_k}.
\end{equation}
\begin{lemma}\label{lemma:affrankbound}(Affine rank bound)
    Fix $\vartheta$. Now consider the real vectors, denoted the realised log-moduli, $g_k := (\ln|\psi_k(x_1)|, \cdots, \ln|\psi_k(x_{|\mathscr{S}|})|)^\top \in \mathbb{R}^{|\mathscr{S}| \times 1}$ for $k = 1, \cdots, K$. The affine subspace $\mathcal{A}:= \mathrm{aff}\left\lbrace g_1, \cdots, g_K\right\rbrace \subset \mathbb{R}^{|\mathscr{S}| \times 1}$ has affine dimension at most $h+1$.
\end{lemma}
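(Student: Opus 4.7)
The plan is to compute the realised log-moduli $g_k$ explicitly and observe that they lie in a fixed $(h+1)$-dimensional affine flat, so differences $g_k-g_1$ span a subspace of dimension at most $h+1$. The argument is a short linear-algebra computation; there is no real obstacle, only bookkeeping about which dimensions are ``shifted'' versus ``linear.''

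First, I would use that the trunk $f_\vartheta$ is real-valued. From \eqref{eqapp:chiandc} and \eqref{eqnapp:psik}, for every $x \in \suppPsi$,
\begin{equation*}
\ln|\psi_k(x)| = \Re\bigl(\chi_k f_\vartheta(x)+c_k\bigr) = \alpha_k f_\vartheta(x) + \beta_k,
\end{equation*}
since $\alpha_k,\beta_k$ and $f_\vartheta(x)$ are real. Fixing an ordering of $\suppPsi$, define the evaluation matrix $F \in \mathbb{R}^{|\suppPsi|\times h}$ whose $i$-th row is $f_\vartheta(x_i)^\top$. Then for every head $k$,
\begin{equation*}
g_k \;=\; F\,\alpha_k^\top + \beta_k\,\mathbf{1}_\suppPsi \;\in\; \mathrm{col}(F)\;+\;\mathbb{R}\,\mathbf{1}_\suppPsi.
\end{equation*}

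Next, I would pass to differences to read off the affine dimension. By definition,
\begin{equation*}
\dim\mathcal{A} \;=\; \dim\,\mathrm{span}\bigl\{g_k - g_1 \,:\, k=2,\ldots,K\bigr\},
\end{equation*}
and each difference satisfies $g_k - g_1 = F(\alpha_k-\alpha_1)^\top + (\beta_k-\beta_1)\mathbf{1}_\suppPsi$, which lies inside the fixed linear subspace $V := \mathrm{col}(F) + \mathbb{R}\,\mathbf{1}_\suppPsi$ of $\mathbb{R}^{|\suppPsi|\times 1}$. Since $V$ is spanned by the $h$ columns of $F$ together with $\mathbf{1}_\suppPsi$, we have $\dim V \le h+1$, and therefore $\dim \mathcal{A} \le h+1$, which is the claim.

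The bound is independent of $K$, and it is attained generically precisely when the columns of $F$ together with $\mathbf{1}_\suppPsi$ are linearly independent and the head parameters $(\alpha_k,\beta_k)$ are chosen so that the affine image of $\{(\alpha_k,\beta_k)\}$ under $(\alpha,\beta)\mapsto F\alpha^\top+\beta\mathbf{1}_\suppPsi$ has full dimension. This observation is exactly what links Lemma~\ref{lemma:affrankbound} to the rank threshold $h+1\ge r_{\mathrm{both}}$ appearing in the main sufficiency theorem: the realised log-moduli cannot populate a larger affine flat than the trunk-augmented-by-constant column space permits, which is why the condition on $r_G$ (and, with an analogous argument for the phase branches, on $r_{\mathrm{both}}$) is necessary as well as sufficient for exact representability on $\suppPsi$.
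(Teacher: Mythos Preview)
Your proof is correct and essentially identical to the paper's: both compute $g_k(x)=\Re(\chi_k f_\vartheta(x)+c_k)=\alpha_k f_\vartheta(x)+\beta_k$, stack configurations into a feature matrix, observe that each $g_k$ (and hence each difference $g_k-g_1$) lies in $\mathrm{col}(F)+\mathbb{R}\mathbf{1}_\suppPsi$, and conclude $\dim\mathcal{A}\le h+1$. The only differences are a transposed convention for the feature matrix and your additional closing paragraph on when the bound is attained, which is extra commentary beyond what the paper's proof includes.
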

\begin{proof}
    For each $x \in \mathscr{S}$, then
    \begin{equation}
    \label{eqapp:gk}
        g_k(x) = \Re \left[ \chi_k f_\vartheta(x) + c_k\right] = \alpha_k f_\vartheta(x) + \beta_k.
    \end{equation}
    Let $\mathrm{F} \in \mathbb{R}^{h \times |\mathscr{S}|}$ be a matrix whose $(\mu, i)$-entry is $\mathrm{F}_{\mu i} := f_{\vartheta, \mu}(x_i)$. Equation \eqref{eqapp:gk} for all configurations is then
    \begin{equation}
    \label{eqnapp:gk2}
        g_k = \mathrm{F}^\top \alpha_k^\top + \beta_k \mathbf{1}_{\mathscr{S}}.
    \end{equation}
    Choose $g_1$ as origin of the affine space. For $k \geq 2$, then
    \begin{equation}
        g_k - g_1 = \mathrm{F}^\top (\alpha_k - \alpha_1)^\top + (\beta_k - \beta_1)\mathbf{1}_{\mathscr{S}}.
    \end{equation}
    Because columns of $\mathrm{F}^\top$ live in an $h$-dimensional linear subspace of $\mathbb{R}^{|\mathscr{S}|}$ and $\mathbf{1}_{\mathscr{S}}$ adds at most one more dimension, the $\mathrm{span}\lbrace g_k - g_1 \,|\, k = 2, \cdots , K\rbrace$ lies in a linear subspace $L(f) := \mathrm{im}\mathrm{F}^\top + \mathrm{span}\lbrace\mathbf{1}_{\mathscr{S}}\rbrace$ whose dimension $ \dim L(f)\leq h +1$ and hence, $\dim\mathcal{A} \leq h + 1$.

\end{proof}

\begin{corollary}\label{cor:phaseaffinerank}(Affine rank bound for realised phase lifts)
    Fix $\vartheta$. For head $k$, write the realised (unwrapped) phase lift
    \begin{equation}
        \omega_k(x) := \Im[ \chi_k f_\vartheta(x) + c_k] = \varphi_k f_\vartheta(x) + \gamma_k,
    \end{equation}
    for all $x \in \mathscr{S}$. Then $\dim\mathrm{aff} \lbrace \omega_1, \cdots, \omega_K \rbrace \leq h + 1$ where $\omega_i$ now are represented by their evaluation vectors.
\end{corollary}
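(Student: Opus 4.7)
The plan is to mirror the proof of Lemma \ref{lemma:affrankbound} almost verbatim, exploiting the fact that the ST-MH parametrisation \eqref{eqnapp:psik} is affine in the head parameters once the trunk parameters $\vartheta$ are frozen. The real and imaginary parts of $\chi_k f_\vartheta(x) + c_k$ are structurally symmetric: Lemma \ref{lemma:affrankbound} controlled $\Re$, and here we want the same control for $\Im$. The key conceptual point is that whatever linear subspace of $\mathbb{R}^{|\mathscr{S}|}$ carries the real shifts $\mathrm{im}\,\mathrm{F}^\top + \mathrm{span}\{\mathbf{1}_\mathscr{S}\}$ is determined entirely by the trunk evaluations and is insensitive to which of the two real head vectors ($\alpha_k$ versus $\varphi_k$) contracts against it.

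First I would decompose, for each $x \in \mathscr{S}$, using \eqref{eqapp:chiandc},
\begin{equation}
\omega_k(x) = \Im[\chi_k f_\vartheta(x) + c_k] = \varphi_k f_\vartheta(x) + \gamma_k,
\end{equation}
so that $\omega_k$ depends only on the imaginary head data $(\varphi_k,\gamma_k)$. Then, reusing the trunk-evaluation matrix $\mathrm{F} \in \mathbb{R}^{h \times |\mathscr{S}|}$ with $\mathrm{F}_{\mu i} = f_{\vartheta,\mu}(x_i)$ from Lemma \ref{lemma:affrankbound}, the evaluation vector over all configurations of $\mathscr{S}$ reads
\begin{equation}
\omega_k = \mathrm{F}^\top \varphi_k^\top + \gamma_k \mathbf{1}_\mathscr{S},
\end{equation}
which is the verbatim analogue of \eqref{eqnapp:gk2} with $(\alpha_k,\beta_k)$ replaced by $(\varphi_k,\gamma_k)$.

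Finally, choosing $\omega_1$ as the origin of the affine hull and writing the shifts
\begin{equation}
\omega_k - \omega_1 = \mathrm{F}^\top(\varphi_k - \varphi_1)^\top + (\gamma_k - \gamma_1)\mathbf{1}_\mathscr{S}
\end{equation}
for $k \geq 2$, every such difference lies in the same linear subspace $L(f) = \mathrm{im}\,\mathrm{F}^\top + \mathrm{span}\{\mathbf{1}_\mathscr{S}\}$ used in the proof of Lemma \ref{lemma:affrankbound}, whose dimension is at most $h+1$. Hence $\dim \mathrm{aff}\{\omega_1,\ldots,\omega_K\} \leq h+1$.

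There is no real obstacle here beyond a notational swap. The only subtlety worth flagging explicitly is that $\omega_k$ is the \emph{realised single-valued lift} $\varphi_k f_\vartheta(x) + \gamma_k \in \mathbb{R}$, not the mod-$2\pi$ argument; the bound is a linear-algebraic statement on these lifts in $\mathbb{R}^{|\mathscr{S}|}$, which is precisely the object that will be compared to the phase span $\mathcal{R}_\Omega$ (of Definition \ref{def:rank}) in the sufficiency theorem downstream.
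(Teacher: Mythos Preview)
Your proof is correct and essentially identical to the paper's own: both write $\omega_k = \mathrm{F}^\top \varphi_k^\top + \gamma_k \mathbf{1}_\mathscr{S}$, anchor at $\omega_1$, and observe that the differences live in $\mathrm{im}\,\mathrm{F}^\top + \mathrm{span}\{\mathbf{1}_\mathscr{S}\}$, which has dimension at most $h+1$. Your additional remark about the lifts versus the mod-$2\pi$ argument is a useful clarification but not part of the logical argument.
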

\begin{proof}
    Exactly as in Lemma \ref{lemma:affrankbound}, for $\mathrm{F} \in \mathbb{R}^{h \times |\mathscr{S}|}$ with $\mathrm{F}_{\mu i} := f_{\vartheta, \mu}(x_i)$, we have $\mathrm{F}^\top \varphi^\top_k + \gamma_k \mathbf{1}_\mathscr{S}$. Hence $\omega_k - \omega_1 = \mathrm{F}^\top (\varphi_k - \varphi_1)^\top + (\gamma_k - \gamma_1) \mathbf{1}_\mathscr{S}$. Thus, $\mathrm{span}\lbrace \omega_k - \omega_1 \,|\, k = 2, \cdots, K \rbrace \subseteq \mathrm{im} \mathrm{F}^\top + \mathrm{span}\lbrace \mathbf{1}_\mathscr{S}\rbrace$, whose dimension is at most $h+1$.
\end{proof}

\begin{remark}
    If one records principal-branch phases $\arg\psi_k \in (-\pi, \pi]^{|\mathscr{S}|}$, then there exists integer vectors $n_k \in \mathbb{Z}^{|\mathscr{S}|}$ with $\arg \psi_k = \omega_k - 2\pi n_k$. The family $\lbrace \arg\psi_k \rbrace$ need not lie in a single affine subspace because different $n_k$ may occur. For capacity and for matching chosen target branches on $\mathscr{S}$, we work with the lifts $\omega_k$ (or fix single-valued branches), for which the affine bound above holds and which satisfy $e^{\ii\omega_k(x)} = \psi_k(x) / |\psi_k(x)|$ pointwise.
\end{remark}
\noindent
The Lemma and corollary above are then a structural capacity statement for a \emph{fixed} trunk. Namely, the set of all realised log-moduli and phases lie in an affine subspace of dimension at most $h + 1$ and thus they quantify how large a family of log-moduli and phases can a trunk of fixed $h$ represent, irrespective of the targets.

\begin{theorem}[ST-MH representability and minimal width on the common support]\label{thm:sufficiency}
Let the set $\lbrace\Psi^{(j)}(x)\rbrace_{j=1}^D$ be a degenerate manifold, and let $\suppPsi$ be the common support. Fix single-valued phase branches $\lbrace\Omega_{\Psi^{(j)}}\rbrace_{j=1}^D$ on $\suppPsi$, and let $r_{\mathrm{both}}=\dim \mathcal R_{\mathrm{both}}$ as in Definition \ref{def:rank}. The following are equivalent:
\begin{enumerate}
    \item There exists a single trunk $f_\vartheta:\suppPsi \rightarrow \mathbb{R}^h$ and linear heads such that $\psi_k(x)=\Psi^{(k)}(x)$ for all $k \le D$ and all $x \in \suppPsi$.
    
    \item $h \ge h_{\mathrm{both}}^\star = r_{\mathrm{both}}-1$ (equivalently, $r_{\mathrm{both}}\le h+1$).
\end{enumerate}
In particular, the minimal width to realise both the log-moduli and the phases with the same trunk on $\suppPsi$ is $h_{\mathrm{both}}^\star$.
\end{theorem}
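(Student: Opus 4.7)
The plan is to prove the equivalence by showing each direction separately, and then to read off the minimal width statement directly from the equivalence. The necessity direction $(1)\Rightarrow(2)$ is essentially a direct application of Lemma \ref{lemma:affrankbound} and Corollary \ref{cor:phaseaffinerank}, while the sufficiency direction $(2)\Rightarrow(1)$ is a constructive argument exploiting the finiteness of $\suppPsi$.

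For $(1)\Rightarrow(2)$, I would start by assuming that some trunk $f_\vartheta$ of width $h$, together with linear heads $(\alpha_k,\beta_k,\varphi_k,\gamma_k)$, reproduces $\Psi^{(k)}$ pointwise on $\suppPsi$ for every $k\le D$. Let $\mathrm{F}\in\mathbb{R}^{h\times|\suppPsi|}$ be the evaluation matrix of $f_\vartheta$ as in the proof of Lemma \ref{lemma:affrankbound}, and define the linear subspace
\begin{equation}
L(f) := \mathrm{im}\,\mathrm{F}^\top + \mathrm{span}\{\mathbf{1}_{\suppPsi}\}\subset \mathbb{R}^{|\suppPsi|},
\end{equation}
which by construction satisfies $\dim L(f)\le h+1$. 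The crucial observation is that both the realised log-moduli $g_k$ and the realised phase lifts $\omega_k$ produced by any head lie inside the \emph{same} subspace $L(f)$: equation \eqref{eqnapp:gk2} and its analogue for $\omega_k$ from Corollary \ref{cor:phaseaffinerank} give $g_k=\mathrm{F}^\top\alpha_k^\top+\beta_k\mathbf{1}_{\suppPsi}$ and $\omega_k=\mathrm{F}^\top\varphi_k^\top+\gamma_k\mathbf{1}_{\suppPsi}$. Matching the target on $\suppPsi$ forces $g_k=\ev{\suppPsi}{G_{\Psi^{(k)}}}$ and, with the chosen single-valued branch, $\omega_k=\ev{\suppPsi}{\Omega_{\Psi^{(k)}}}$. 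Together with $\mathbf{1}_{\suppPsi}\in L(f)$, this yields $\mathcal{R}_{\mathrm{both}}\subseteq L(f)$, hence $r_{\mathrm{both}}=\dim\mathcal{R}_{\mathrm{both}}\le h+1$.

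For $(2)\Rightarrow(1)$, the argument is constructive. Assuming $r_{\mathrm{both}}\le h+1$, pick a basis of $\mathcal{R}_{\mathrm{both}}$ of the form $\{\mathbf{1}_{\suppPsi},v_1,\ldots,v_{r_{\mathrm{both}}-1}\}$ (this is possible because $\mathbf{1}_{\suppPsi}\in\mathcal{R}_{\mathrm{both}}$ by Definition \ref{def:rank}). Define a trunk $f_\vartheta:\suppPsi\to\mathbb{R}^{h\times 1}$ whose first $r_{\mathrm{both}}-1$ components, evaluated on $\suppPsi$, reproduce the vectors $v_1,\ldots,v_{r_{\mathrm{both}}-1}$, with the remaining $h-(r_{\mathrm{both}}-1)\ge 0$ components set to zero (or any fixed constants). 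Because $\suppPsi$ is finite, this trunk is realisable by any universal approximator architecture, and in particular by the class of feed-forward networks used in the paper; this step is purely a representability claim and no learnability is invoked. For each head $k$, the targets $\ev{\suppPsi}{G_{\Psi^{(k)}}}$ and $\ev{\suppPsi}{\Omega_{\Psi^{(k)}}}$ lie in $\mathrm{span}\{\mathbf{1}_{\suppPsi},v_1,\ldots,v_{r_{\mathrm{both}}-1}\}$ by construction, so solving the linear systems yields real coefficients $(\alpha_k,\beta_k)$ and $(\varphi_k,\gamma_k)$ with $G_{\Psi^{(k)}}=\alpha_k f_\vartheta+\beta_k$ and $\Omega_{\Psi^{(k)}}=\varphi_k f_\vartheta+\gamma_k$ pointwise on $\suppPsi$. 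Plugging these into \eqref{eqnapp:psik} gives $\psi_k(x)=\Psi^{(k)}(x)$ for every $x\in\suppPsi$. The minimal-width statement then follows immediately: representability forces $h\ge r_{\mathrm{both}}-1$, and this bound is tight by the construction, so $h_{\mathrm{both}}^\star=r_{\mathrm{both}}-1$.

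The main obstacle I expect is conceptual rather than technical: it lies in keeping the phase branch bookkeeping honest. The lifted phases $\omega_k$ are only well-defined modulo $2\pi$, and what the ansatz actually produces via $\psi_k/|\psi_k|=e^{\mathrm{i}\omega_k}$ is an element of the unit circle, not of $\mathbb{R}$; consequently the affine-rank bounds only transfer to a bound on $r_{\mathrm{both}}$ \emph{after} one fixes a single-valued branch of each $\Omega_{\Psi^{(k)}}$, as emphasised in the remark following Corollary \ref{cor:phaseaffinerank}. Both directions of the proof must therefore be stated with respect to the same fixed branches on $\suppPsi$, which is exactly the hypothesis built into Definition \ref{def:rank} and the theorem. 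Once this is aligned, the rest reduces to finite-dimensional linear algebra on $\mathbb{R}^{|\suppPsi|}$.
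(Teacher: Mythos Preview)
Your proposal is correct and follows essentially the same approach as the paper. Both arguments construct the trunk from a basis of $\mathcal{R}_{\mathrm{both}}$ containing $\mathbf{1}_{\suppPsi}$ for sufficiency, and both use the containment $\mathcal{R}_{\mathrm{both}}\subseteq \mathrm{col}(X)$ (your $L(f)$) for necessity; the paper phrases the latter as a contradiction while you do it directly, but the content is identical, and your explicit flagging of the phase-branch bookkeeping is, if anything, slightly more careful than the paper's treatment.
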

\begin{proof}
    Let $\mathcal{R}_{\mathrm{both}}$ be as defined in Definition \ref{def:rank}. Since $\mathcal{R}_{\mathrm{both}} \subset \mathbb{R}^{|\mathscr{S}|}$ and $\mathbf{1}_\mathscr{S} \in \mathcal{R}_{\mathrm{both}}$, there exists a basis $\lbrace\mathbf{1}_\mathscr{S}, b_1, \cdots, b_{r_\mathrm{both} - 1}\rbrace$ of $\mathcal{R}_\mathrm{both}$. Now assume $r_\mathrm{both} \leq h + 1$ and define the following $(h+1)$-many vectors
    \begin{equation}
        u_0 := \mathbf{1}_\mathscr{S} \farcomma u_i := b_i \quad(i = 1, \cdots, r_\mathrm{both} - 1).
    \end{equation}
    where if $h+1 > r_\mathrm{both}$, pad with any additional vectors $u_{r_\mathrm{both}}, \cdots, u_h$ from $\mathcal{R}_\mathrm{both}$. Therefore, by construction $\mathrm{span}\lbrace u_0, \cdots, u_h\rbrace = \mathcal{R}_\mathrm{both}$, since independence is not required for a spanning set. It is always possible to choose such a spanning set which contains the constant column $\mathbf{1}_\mathscr{S}$.
    \newpara
    Since $u_i$ are vectors, that means that there exists real-valued functions $U_i : \mathscr{S} \rightarrow \mathbb{R}$ to which their evaluation vector is precisely $u_i$, for all $i$. Therefore, we can define the trunk on $\mathscr{S}$ by setting $f_{\vartheta, i} := U_i$ where $i = 1, \cdots, h$ and hence the evaluation vector of $f_{\vartheta, i}$ is equal to $u_i$.  Now write a matrix $X = \left[\mathbf{1}_\mathscr{S} \,\,\, f_{\vartheta, 1} \,\,\,  \cdots \,\,\, f_{\vartheta, h}\right] \in \mathbb{R}^{|\suppPsi| \times (h+1)}$. Then, $\mathrm{col}(X) = \mathrm{span}\lbrace u_0, \cdots, u_h\rbrace = \mathcal{R}_\mathrm{both}$. Consequently, since every target evaluation vector $G_{\Psi^{(j)}}$ and $\Omega_{\Psi^{(j)}}$ lies $\mathcal{R}_\mathrm{both}$, then there exist $\theta_j = (\beta_j, \alpha_j^\top)^\top$ and $\eta_j = (\gamma_j, \varphi_j^\top)^\top$ with
    \begin{equation}
        X \theta_j = G_{\Psi^{(j)}}  \farcomma  X \eta_j \equiv \Omega_{\Psi^{(j)}}.
    \end{equation}
    Interpreting these equalities pointwise on $\mathscr{S}$ gives, for all $x\in\mathscr{S}$,
    \begin{equation}
        \alpha_j f_\vartheta(x) + \beta_j = G_{\Psi^{(j)}}(x) \farcomma \varphi_j f_\vartheta(x) + \gamma_j = \Omega_{\Psi^{(j)}}(x) \,(\mathrm{mod} 2\pi)
    \end{equation}
    Since $\alpha_j f_\vartheta(x) + \beta_j = \ln|\psi_j(x)|$ and $\varphi_j f_\vartheta(x) + \gamma_j = \arg\psi_j(x)$, then 
    \begin{equation}
        \psi_j(x) = \expfunc{G_{\Psi^{(j)}}(x) + \ii\Omega_{\Psi^{(j)}}(x)} = \Psi^{(j)}(x)\quad (x\in \suppPsi),
    \end{equation}
    for all $x \in \mathscr{S}$.
    \newpara
    Lastly, assume, for contradiction, that $r_{\mathrm{both}} > h+1$ and yet there exists a trunk $f_\vartheta$ and heads with $\psi_k = \Psi^{(k)}$ on $\suppPsi$ for all $k$. Then both the realised log-moduli and the chosen phase branches lie in $\mathrm{col}(X)$, implying $\dim\,\mathrm{col}(X) \ge r_{\mathrm{both}}$. However, $\dim\,\mathrm{col}(X) \le h+1$, causing a contradiction. Therefore, no single trunk of width $h$ can represent all $D$ states on $\suppPsi$ when $r_{\mathrm{both}} > h+1$. We denote the minimal width required for exact representation to be $h^\star_\mathrm{both} = r_\mathrm{both} - 1$.
\end{proof}
\begin{corollary}
    Let $T \subseteq \suppPsi$ be any non-empty subset (e.g. a symmetry or conservation law sector). Since evaluation vectors can be defined on $T_1$, define the subspaces $\mathcal{R}_G(T), \mathcal{R}_\Omega(T)$ and $\mathcal{R}_\mathrm{both}(T)$ now with the evaluation vectors restricted on $T$. Consequently, $r_\mathrm{both}(T) \leq r_\mathrm{both}(\suppPsi)$ since the restriction is a linear map that cannot increase the linear span generated by the (restricted) columns and hence, the dimensions cannot increase. Thus, a similar sector-specific representability assertion (following the same steps of Theorem \ref{thm:sufficiency}) can be done, yielding now representability criterion of $h_\mathrm{both}^\star(T) = r_\mathrm{both}(T) - 1$.
\end{corollary}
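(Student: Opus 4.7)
The plan is to reduce to Theorem \ref{thm:sufficiency} by replacing $\suppPsi$ with the non-empty subset $T$ throughout, and then verify that the only new ingredient — the inequality $r_\mathrm{both}(T) \leq r_\mathrm{both}(\suppPsi)$ — follows immediately from the linearity of coordinate restriction. First I would introduce the restriction map $\rho_T : \mathbb{R}^{|\suppPsi|} \to \mathbb{R}^{|T|}$ that selects the coordinates indexed by $T$. This map is linear, and by construction it sends $\mathbf{1}_{\suppPsi} \mapsto \mathbf{1}_T$, $\ev{\suppPsi}{G_{\Psi^{(j)}}} \mapsto \ev{T}{G_{\Psi^{(j)}}}$, and likewise for the phase evaluation vectors. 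Hence $\rho_T(\mathcal{R}_\mathrm{both}(\suppPsi)) = \mathcal{R}_\mathrm{both}(T)$, and since no linear map can increase dimension, $r_\mathrm{both}(T) \leq r_\mathrm{both}(\suppPsi)$, confirming the monotonicity claim already asserted in the statement.

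Second, I would rerun the sufficiency direction of Theorem \ref{thm:sufficiency} with $\suppPsi$ replaced by $T$. Because $\mathbf{1}_T \in \mathcal{R}_\mathrm{both}(T)$, I can always extract a spanning set $\{\mathbf{1}_T, b_1^{(T)}, \ldots, b_{r_\mathrm{both}(T)-1}^{(T)}\}$ of $\mathcal{R}_\mathrm{both}(T)$ containing the constant. Whenever $h \geq r_\mathrm{both}(T) - 1$, I use the $b_i^{(T)}$ (padded to $h$ entries with any elements of $\mathcal{R}_\mathrm{both}(T)$) as the component functions of a trunk $f_\vartheta : T \to \mathbb{R}^h$. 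The matrix $[\mathbf{1}_T \,\, f_{\vartheta,1} \,\,\cdots\,\, f_{\vartheta,h}]$ then has column space equal to $\mathcal{R}_\mathrm{both}(T)$, which contains every restricted target log-modulus and phase branch, so the linear solvability argument from the proof of Theorem \ref{thm:sufficiency} produces heads $(\alpha_k, \beta_k, \varphi_k, \gamma_k)$ realising $\psi_k(x) = \Psi^{(k)}(x)$ pointwise on $T$. If a trunk defined on all of $\suppPsi$ (or $\mathcal{C}$) is desired, extend the component functions arbitrarily off $T$, which has no effect on the pointwise identities over $T$.

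Third, for the necessity direction I would apply Lemma \ref{lemma:affrankbound} and Corollary \ref{cor:phaseaffinerank} with the evaluation vectors restricted to $T$: any realised log-moduli and phase lifts on $T$ lie in an affine subspace of dimension at most $h + 1$, so if a trunk of width $h$ exactly represents all $\Psi^{(j)}$ on $T$, then necessarily $r_\mathrm{both}(T) \leq h + 1$. Combining both directions yields the sector-specific minimal width $h_\mathrm{both}^\star(T) = r_\mathrm{both}(T) - 1$. The main obstacle is purely notational bookkeeping, namely being consistent about whether the trunk and heads are viewed as acting on $T$ or on all of $\suppPsi$; the finite-set, pointwise character of all identities ensures that the algebraic content of Theorem \ref{thm:sufficiency} transfers verbatim, and no additional analytic or topological hypotheses are needed.
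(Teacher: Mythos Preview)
Your proposal is correct and follows essentially the same route as the paper: the corollary's proof in the paper is embedded in its statement and consists precisely of noting that coordinate restriction is a linear map (hence $r_\mathrm{both}(T)\le r_\mathrm{both}(\suppPsi)$) and then invoking the steps of Theorem~\ref{thm:sufficiency} verbatim with $\suppPsi$ replaced by $T$. You have simply spelled out both parts in more detail, including the explicit restriction map $\rho_T$ and the separate sufficiency/necessity directions, which the paper leaves implicit.
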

\begin{remark}
    In general, for a fixed trunk, the realised log-moduli and phases lie in the linear subspace $\mathrm{span}\lbrace \mathbf{1}_\mathscr{S}, f_{\vartheta, 1}, \cdots, f_{\vartheta, h}\rbrace$. If the features are all zero, all equal, or some are constant, this span shrinks and the affine dimension bound in Lemma \ref{lemma:affrankbound} becomes even tighter. This degeneracy consequently only makes the representation harder. In the proof above, we avoid this by choosing features whose evaluation columns form (together with $\mathbf{1}_\mathscr{S}$) a spanning set of $\mathcal{R}_\mathrm{both}$. When $h + 1 > r_\mathrm{both}$, we may pad with arbitrary extra features (including zeros). These do not enlarge the column space and do not affect solvability of the linear system.
\end{remark}
\begin{remark}
\label{rem:remark3}
     Theorem \ref{thm:sufficiency} is a representability result on the finite common support. It does not guarantee that any particular neural architecture of width $h$ can realise the target functions or that optimisation will find them. Such statements require extra expressivity assumptions. The proof's assignment of $f_{\vartheta, i} = u_i$ is an existence claim which for standard (e.g. MLP) trunks is implementable exactly on $\suppPsi$. There is no claim made regarding learnability or parameter efficiency.
\end{remark}
\begin{remark}
    Beyond the rank condition $r_\mathrm{both} \leq h + 1$, exact representability also requires that, after choosing a single-valued branch for each phase function $\Omega_{\Psi^{(j)}}$, the phases lie in the same affine span generated by the trunk features and a constant (mod $2\pi$). This is an expressivity requirement on $f_\vartheta$. On a finite configuration space $\mathscr{S}$, a global branch can always be selected. In models with twisted boundary conditions or magnetic flux, the practical question is whether those phases are representable as affine functions of $f_\vartheta$, not a topological obstruction.
\end{remark}
\noindent
Therefore, it is shown that using the ST-MH ensemble, one can match all $D$ eigenstates exactly on $\suppPsi$ if $r_\mathrm{both} \leq h + 1$ where $r_\mathrm{both}$ is computed from the chosen single-valued phase branches on $\suppPsi$. Conversely, if $r_\mathrm{both} > h + 1$, no single-trunk width $h$ can reproduce all $D$ eigenstates. This is independent of the head count $K$ (as long as $K \geq D$).
\newpara
This failure is independent of the head count, it is simply a feature space bottleneck. Namely, all heads share the same feature vector and each head can only take linear combinations of those $h + 1$-many numbers. If $r > h + 1$, the set of eigenstate moduli need more than $h + 1$ linearly independent real functions to be written and no amount of extra heads can create new features, they only reuse the same $h + 1$ coordinates.


\section{\label{app:VMCNQSClasses}Variational Monte Carlo with penalty term}

For any normalisable $\psi$, define the Born probability $p_\psi(x)$ as
\begin{equation}
    N[\psi]:=\sum_{x \in \configSpace} |\psi(x)|^2 \farcomma p_\psi(x) = \frac{|\psi(x)|^2}{N[\psi]},
\end{equation}
and the local (energy) estimator takes the standard form \cite{Lange:2024nsr,Hibat_Allah_2020}
\begin{equation}
    E_{loc, \psi}(x) = \sum_{y \in \configSpace} \op{H}_{xy} \frac{\psi(y)}{\psi(x)}.
\end{equation}
For some (Hermitian) Hamiltonian $\op{H}$. Expectation values with respect to $p_\psi(x)$ are then \cite{Lange:2024nsr}
\begin{equation}
    \expect{A}_{p_\psi} = \sum_{x \in \configSpace} p_\psi(x) A(x).
\end{equation}
For the ST-MH, the wave functions share a trunk $f_\vartheta : \configSpace \rightarrow \mathbb{R}^{h \times 1}$ with $\vartheta \in \mathbb{R}^{P_T}$. For each head $k = 1, \cdots, K$ choose real vectors and scalars $\alpha_k , \varphi_k \in \mathbb{R}^{1 \times h}$ and $\beta_k, \gamma_k \in \mathbb{R}$ and define the complex coefficients $\chi_k := \alpha_k + \ii \varphi_k \in \mathbb{C}^{1 \times h}$ and $c_k := \beta_k + \ii \gamma_k \in \mathbb{C}$. The feature vector at configuration $x$ is then $ f_\vartheta(x) = (f_{\vartheta, 1}(x), \cdots, f_{\vartheta, h}(x))^\top \in \mathbb{R}^{h \times 1}$. Note that in the complex-parameter case, whenever a derivative acts on $\psi_k^*$, the corresponding log-derivative must be complex conjugated. We therefore adopt this convention below.
\newpara
Given the above, the head-$k$ wave-function then takes the form
\begin{equation}
    \psi_k(x) = \expfunc{\chi_k f_\vartheta(x) + c_k}.
\end{equation}
Consider now the head parameters. For component $\alpha_{k_\mu}$ with $\mu = 1, \cdots, h$, the log derivative is
\begin{equation}
    \partial_{\alpha_{k_\mu}}\ln\psi_k(x) = f_{\vartheta, \mu} (x).
\end{equation}
Similarly, one can compute the remaining log derivatives as $\partial_{\beta_{k}}\ln\psi_k(x) = 1$, $\partial_{\varphi_{k_\mu}}\ln\psi_k(x) = \ii f_{\vartheta, \mu}(x)$ and $\partial_{\gamma_{k}}\ln\psi_k(x) = \ii$. For the trunk parameters, one writes the Jacobian
\begin{equation}
    \partial_{\vartheta_i} f_\vartheta(x) = (\partial_{\vartheta_i} f_{\vartheta, 1}(x), \cdots, \partial_{\vartheta_i} f_{\vartheta, h})^\top.
\end{equation}
All together, the log derivative vectors take the form
\begin{align}
    \mathcal{O}_k := \nabla_{\theta_k^{(h)}} \ln \psi_k(x) \in \mathbb{C}^{P_H},
    \\
    (\mathcal{O}_k^i)^\top (x) := \partial_{\vartheta_i} \ln\psi_k(x) \in \mathbb{C}.
\end{align}
The energy expectation of head-$k$ is straight-forward. Namely,
\begin{equation}
    E_k := \frac{\innerProduct{\psi_k}{\op{H}\psi_k}}{N[\psi_k]} = \sum_{x \in \configSpace}p_{\psi_k}(x) E_{loc, \psi_k}(x).
\end{equation}
To compute the gradient of $E_k$, let $P$ be any arbitrary head parameter. Then
\begin{equation}
    \partial_P E_k = \partial_P \sum_{x \in \configSpace}p_{\psi_k}(x) E_{loc, \psi_k}(x).
\end{equation}
where
\begin{align}
    \partial_P p_{\psi_k}(x) & = 2\Re\left[ (\mathcal{O}_{k, P}(x) - \expect{\mathcal{O}_{k, P}(x)}_k) p_{\psi_k}(x)\right],
    \\
    \partial_P E_{loc, \psi_k} & = \sum_{y \in \configSpace} \op{H}_{xy}(\mathcal{O}_{k, P}(y) - \mathcal{O}_{k, P}(x))\frac{\psi_k(y)}{\psi_k(x)}.
\end{align}
Taking the expectation and separating the real part gives an identity akin to the one obtained in the case of a single network \cite{Lange:2024nsr,Hibat_Allah_2020}
\begin{equation}
\label{eqapp:energyderhead}
    \partial_P E_k = 2\Re\left[ \expect{(\mathcal{O}_{k, P} - \expect{\mathcal{O}_{k, P}}_{p_{\psi_k}})^*(E_{loc, \psi_k} - E_k)}_{p_{\psi_k}} \right].
\end{equation}
For now a shared trunk parameter $\vartheta_i$, the same algebra applies. One simply replaces $\mathcal{O}_{k, P} \rightarrow (\mathcal{O}_k^i)^\top$. Hence, 
\begin{equation}
\label{eqapp:energydertrunk}
    \partial_{\vartheta_i} E_k = 2\Re\left[ \expect{((\mathcal{O}_k^i)^\top - \expect{(\mathcal{O}_k^i)^\top}_{p_{\psi_k}})^*(E_{loc, \psi_k} - E_k)}_{p_{\psi_k}} \right].
\end{equation}
Where the $x$ dependence is implied. The cost function now also includes an overlap and orthogonality penalty term as shown in equation \eqref{eq:costMTMH} (there for the case of MT-MH NQS ensembles). Define the unnormalised overlaps as $\Sigma_{kl} = \sum_{x \in \configSpace} \psi_k^*(x) \psi_l(x)$. Letting $N_k := \Sigma_{kk}$, then
\begin{equation}
\label{appeq:normoverlapmat}
\sigma_{kl} := \frac{\Sigma_{kl}}{\sqrt{N_k N_l}}.
\end{equation}
Now, for some head parameter $P \in \theta_k^{(h)}$, a straightforward calculation shows that the derivatives take the form
\begin{equation}
    \partial_P \sigma_{kl} = \sqrt{\frac{N_k}{N_l}} \expect{\left(\mathcal{O}_{k, P}(x) - \expect{\mathcal{O}_{k, P}(x)}_{p_{\psi_k}}\right)^* \frac{\psi_l(x)}{\psi_k(x)}}_{p_{\psi_k}}.
\end{equation}
On the other hand, for some trunk parameter $\vartheta_i$, both $\psi_l(x)$ and $\psi_k(x)$ vary, and symmetrising gives
\begin{align}
    \partial_{\vartheta_i} \sigma_{kl} = \sqrt{\frac{N_k}{N_l}}&\expect{\left((\mathcal{O}_k^i)^\top(x) - \expect{(\mathcal{O}_k^i)^\top(x)}_{p_{\psi_k}} \right)^* \frac{\psi_l(x)}{\psi_k(x)}}_{p_{\psi_k}}
    + (k \leftrightarrow l).
\end{align}
Now, for the penalty $\mathcal{P} = \frac{1}{2} \sum_{k \neq l} |\sigma_{kl}|^2$, and for any parameter $P$, then
\begin{equation}
\label{eqapp:penaltyderhead}
    \partial_P \mathcal{P} = \Re\left[\sum_{k \neq l} \sigma^*_{kl} \left(\expect{\mathcal{O}_{k, P}^*}_{kl} - \sigma_{kl} \expect{\mathcal{O}_{k, P}^*}_{kk} \right)\right],
\end{equation}
where 
\begin{equation}
    \expect{\mathcal{O}_{k, P}^*}_{kl} := \sqrt{\frac{N_k}{N_l}} \expect{\mathcal{O}_{k, P}^*(x) \frac{\psi_l(x)}{\psi_k(x)}}_{p_{\psi_k}}
\end{equation}
Thus for the full cost function, given some ensemble weights $w_k$ such that $\sum_k w_k = 1$ and a penalty strength $\lambda$, then for a head parameter $P \in \theta_k^{(h)}$
\begin{equation}
    \partial_P C = w_k \partial_P E_k + \lambda \partial_P \mathcal{P},
\end{equation}
where one uses equations \eqref{eqapp:energyderhead} and \eqref{eqapp:penaltyderhead}. For a shared trunk parameter $\vartheta_i$, then
\begin{equation}
    \partial_{\vartheta_i} C = \sum_{k = 1}^K w_k \partial_{\vartheta_i} E_k + \lambda \partial_{\vartheta_i} \mathcal{P},
\end{equation}
where one uses equation \eqref{eqapp:penaltyderhead} with $\mathcal{O}_{k, P} \rightarrow (\mathcal{O}_k^i)^\top$ and equation \eqref{eqapp:energydertrunk} for the trunk derivatives of the energy.


\section{\label{app:methodology}Optimisation and sampling protocol for ST-MH NQS ensembles}

In what follows, we adopt the notation of Appendix \ref{app:VMCNQSClasses}. Namely for a given configuration $x \in \configSpace$, the ST-MH NQS ensemble reads
\begin{equation}
    \psi_k(x) = \expfunc{\chi_k f_\vartheta(x) + c_k},
\end{equation}
whereby $\chi_k \in \mathbb{C}^{1 \times h}, c_k \in \mathbb{C}$ and $f_\vartheta : \configSpace \rightarrow \mathbb{R}^{h \times 1}$. We further let $\ell_k(x) := \ln \psi_k(x)$.


\subsection{Sampling}

For sampling, we use two complementary modes for estimating the energies and overlaps. The first is \emph{per-head} sampling, where for each head $k$, we run an independent Markov chain targeting $q_k(x) := p_{\psi_k}(x)$ and form simple average over i.i.d. or thinned samples $x_i \sim q_k$. This maximises effective sample size (ESS) per head, but yields head-specific supports. The second mode is a \emph{shared-mixture} sampling (self-normalised importance sampling, SNIS) where a single chain targets the uniform (unnormalised) mixture
\begin{equation}
    q_{\mathrm{mix}}(x) := \frac{1}{K} \sum_{m = 1}^K |\psi_m(x)|^2,
\end{equation}
and self-normalised importance weights for head $k$ are
\begin{equation}
    \omega_{k, i} \propto \frac{|\psi_k(x_i)|^2}{q_{\mathrm{mix}}(x_i)} \farcomma \sum_{i = 1}^S \omega_{k, i} = 1.
\end{equation}
Note that normalising constants cancel both in acceptance ratios and in the normalised weights, thus no explicit per-head factors $N_m$ are needed in the mixture. The per-head ESS is then
\begin{equation}
    \mathrm{ESS}_k = \frac{1}{\sum_{i}\omega_{k, i}^2} \in [1, S].
\end{equation}
This mixture sampling ensures that all heads see the union of supports from all states, especially early in training, which stabilises overlaps and orthogonality pressure.


\subsection{Estimators and cost}

For head $k$, the Rayleigh quotient is estimated via local energies as 
\begin{equation}
\label{eqapp:energyestimator}
    \hat{E}_k = \begin{cases}
        \frac{1}{S} \sum_{i=1}^S E_{loc}^{(k)}(x_i) &\,,\, x_i \sim q_k \, (\text{per-head}) \\
        \sum_{i = 1}^S \omega_{k, i} E_{loc}^{(k)}(x_i) &\,,\, x_i \sim q_{\mathrm{mix}} \, (\text{mixture}).
    \end{cases}
\end{equation}
The normalised overlaps use the ratio form shown in Appendix \ref{app:VMCNQSClasses}. With
\begin{equation}
    L_{kl}(x) := \ell_l(x) - \ell_k(x) = \ln \frac{\psi_l(x)}{\psi_k(x)},
\end{equation}
an estimator of $\sigma_{kl}$ is
\begin{equation}
\label{eqapp:estimatorsigma}
    \hat{\sigma}_{kl} = \begin{cases}
        \frac{1}{S} \sum_{i = 1}^S \expfunc{L_{kl}(x_i)} & \,,\, x_i \sim q_k, \\
        \sum_{i = 1}^S \omega_{k, i}\expfunc{L_{kl}(x_i)} &\,,\, x_i \sim q_{\mathrm{mix}} .
    \end{cases}
\end{equation}
Collecting $\hat{\sigma}_{kl}$ for all pairs forms the empirical overlap matrix $\mathbf{\sigma} \in \mathbb{C}^{K \times K}$.
\newpara
The ratio form in \eqref{eqapp:estimatorsigma} is unbiased for the \emph{unnormalised} quantity $\Sigma_{kl}/N_k$, \emph{not} for the normalised overlap $\sigma_{kl}=\Sigma_{kl}/\sqrt{N_kN_l}$. In particular,
\begin{equation}
    \mathbb{E}_{q_k}\left[\expfunc{L_{kl}(x)}\right] = \frac{\Sigma_{kl}}{N_k} = \sigma_{kl} \sqrt{\frac{N_l}{N_k}},
\end{equation}
and the mixture variant to approximate the same expectation under $q_{\mathrm{mix}}$ inherits this scaling. This surrogate is therefore \emph{asymmetric} in $(k,l)$ and \emph{scale sensitive} to the norms $N_k$ and indeed is a \emph{biased} estimator of the normalised overlaps. In practice, two design choices made this surrogate behave well in our simulations: (i) we anneal $\lambda$ gently such that heads approach the degenerate manifold before strongly imposing orthogonality pressure, and (ii) the per-sample coefficients used in the penalty include a \emph{normalisation-correction} (see section \ref{app:subsecGradients} below) on the sampling head that discourages trivial rescaling. For all reported results, we validate by recomputing exact, normalised overlaps \emph{post-training} (via full enumeration for the small systems), which removes any residual ambiguity due to scaling. The post-training results indeed support the results obtained during the training.
\newpara
We optimise the weighted energy plus an orthogonal penalty cost
\begin{equation}
    C(\Theta) = \sum_{k = 1}^K w_k \hat{E}_k(\Theta) + \lambda \mathcal{P}(\hat{\sigma}(\Theta)),
\end{equation}
where $\sum_{k}w_k = 1$. In practice, we found the Frobenius penalty 
\begin{equation}
\label{eq:newpenaltoffdiag}
    \mathcal{P}(\sigma) = \lVert \sigma - \mathbf{1}_K\rVert^2_F = \sum_k |\sigma_{kk} - 1|^2 + \sum_{k \neq l}|\sigma_{kl}|^2.
\end{equation} 
convenient as it treats diagonal corrections and off-diagonals uniformly and mirrors diagnostics used post-optimisation. A linear anneal of the penalty strength $\lambda$ from a small value to its final value helps heads approach the degenerate manifold before enforcing strict orthogonality. 
\newpara
Note that in Appendix \ref{app:VMCNQSClasses}, an equivalent off-diagonal penalty $\frac{1}{2}\sum_{k \neq l} |\sigma_{kl}|^2$ is used, and switching between that and equation \eqref{eq:newpenaltoffdiag} only modifies small diagonal terms and corresponding gradients in a straightforward manner.


\subsection{\label{app:subsecGradients}Gradients}

The standard VMC identities from Appendix \ref{app:VMCNQSClasses} yield the standard gradients shown in equations \eqref{eqapp:energyderhead} and \eqref{eqapp:energydertrunk} whereby now, the expectations are taken with respect to $q_k$ (per-head) or as SNIS reweighted expectations under $q_{\mathrm{mix}}$. 
\newpara
For the overlap penalty \eqref{eq:newpenaltoffdiag}, differentiate through the estimator $\hat{\sigma}_{kl}$ in \eqref{eqapp:estimatorsigma}. Let
\begin{equation}
    r_{kl}(x) := e^{L_{kl(x)}} \implies \hat{\sigma}_{kl} = \sum_{i} \tilde{\omega}_{k, i} r_{kl}(x_i),
\end{equation}
where 
\begin{equation}
    \tilde{\omega}_{k,i} = \begin{cases}
        \frac{1}{S} &\farcomma x_i \sim q_k, \\
        \omega_{k, i} &\farcomma x_i \sim q_{\mathrm{mix}}.
    \end{cases}
\end{equation}
Now define the convenient product
\begin{equation}
    W_{kl, i} := \hat{\sigma}_{kl} r_{kl}(x_i).
\end{equation}
Then the stochastic gradient contributions that match the Appendix \ref{app:VMCNQSClasses} for the surrogate \eqref{eqapp:estimatorsigma} form up-to the ratio-estimator scaling under importance sampling can be expressed via per-sample coefficients and accumulated across samples/heads. 
\newpara
Specifically, because heads share the trunk $f_\vartheta$, one can assemble the full stochastic gradient from a single reverse-mode sweep through the trunk by forming a scalar function whose gradient equals the gradient of the cost. Concretely, let
\begin{align}
    F(\Theta) &= \sum_{k, i} \big[ c_{k, i, \mathrm{re}}^{(E)} \Re[\ell_k(x_i)] 
                + c_{k,i,\mathrm{im}}^{(E)} \Im[\ell_k(x_i)] \big] \nonumber\\
              &\quad + \sum_{k \neq l} \sum_i \big[ c_{kl, i, \mathrm{re}}^{(\mathcal{P})} \Re[\ell_k(x_i)] 
                + c_{kl, i, \mathrm{im}}^{(\mathcal{P})} \Im[\ell_k(x_i)] \big],
\end{align}
such that $\nabla_\Theta F (\Theta) = \nabla_\Theta C(\Theta)$. This implies that with the estimator $E_k$ of the form \eqref{eqapp:energyestimator},  and weights $\tilde{\omega}_{k,i} \in \lbrace 1/S, \omega_{k, i}\rbrace$, then
\begin{align}
    c_{k, i, \mathrm{re}}^{(E)} & = 2 \Re [E_{loc}^{(k)}(x_i) - \hat{E}_k] \tilde{\omega}_{k,i},
    \\
    c_{k, i, \mathrm{im}}^{(E)} & = 2 \Im [E_{loc}^{(k)}(x_i) - \hat{E}_k] \tilde{\omega}_{k,i}.
\end{align}
For the penalty term, a choice of per-sample coefficients that matches the unbiased gradient of \eqref{eq:newpenaltoffdiag} under the estimator \eqref{eqapp:estimatorsigma} is
\begin{align}
    c_{kl, i, \mathrm{re}}^{(\mathcal{P})} & = \lambda \tilde{\omega}_{k, i} \Re [W_{kl, i}], \\
    c_{kl, i, \mathrm{im}}^{(\mathcal{P})} & = - \lambda \tilde{\omega}_{k, i} \Im [W_{kl, i}],
\end{align}
for head $l$ and 
\begin{align}
    c_{kk, i, \mathrm{re}}^{(\mathcal{P})} & = \lambda \tilde{\omega}_{k, i} (2\Re[W_{kl, i}] - 2|\hat{\sigma}_{kl}|^2),
    \\
    c_{kk, i, \mathrm{im}}^{(\mathcal{P})} & = \lambda \tilde{\omega}_{k, i} \Im[W_{kl, i}],
\end{align}
for head k, for all $k \neq l$. The diagonal terms from $\lVert \sigma - \mathbf{1}_K\rVert^2_F$ are handled analogously.
\newpara
We note that our use of a \emph{biased} estimator of the normalised overlaps, as shown in \eqref{eqapp:estimatorsigma}, dictates a modification of the single-pass function $F$. Namely, differentiating the normalisation in $\sigma_{kl}$, the $\sqrt{N_k}$ factor contributes a constant centring term for head $k$ which shows up in the per-sampling coefficient as $-2|\hat{\sigma}_{kl}|^2$. This counters the tendency to reduce $r_{kl}$ by merely inflating $N_k$. Empirically, this steers optimisation toward reducing the numerator $\Sigma_{kl}$ of the normalised overlap (hence genuine orthogonality), not just reweighing norms. Finally, we note that an \emph{unbiased} estimator for such normalised overlaps can be in principle constructed, without the general scheme of the training protocol mentioned here being affected. 
\newpara
Given the above, a single call to reverse-mode automatic differentiation accumulates contributions from all heads through the shared trunk. In contrast, an MT-MH ensemble requires $K$ separate backward passes (one per trunk).


\subsection{Orthogonality residuals}

One diagnostic we use during and after training is the Frobenius norm of the head-overlap matrix. In what follows, we heuristically quantify an acceptable range for residual fluctuations in the Frobenius norm during training. 
\newpara
During training, two effects dominate residual fluctuations. Namely:
\begin{itemize}
    \item Monte Carlo noise: each entry $\hat{\sigma}_{kl}$ is a sample mean (per-head) or a self-normalised weighted mean (mixture). For $x_i \sim q_k$ and $S$ effective samples, then
    \begin{equation}
        \mathrm{Var}[\hat{\sigma}_{kl}] \approx \frac{v_{kl}}{S} \farcomma v_{kl} = \mathrm{Var}_{q_k}[r_{kl}(x)].
    \end{equation}
    Under the mixture approach with normalised weights, let $s_{kl}^2 = \sum_i \omega_{k, i}|r_{kl}(x_i) - \hat{\sigma}_{kl}|^2$. A large-sample approximation then gives
    \begin{equation}
    \mathrm{Var}[\hat{\sigma}_{kl}] \approx \frac{s^2_{kl}}{\mathrm{ESS}_k}.
    \end{equation}
    Aggregating entries (treating them as approximately independent at large batch size), the expected squared Frobenius error admits the crude estimate
    \begin{equation}
    \label{eqapp:essband}
        \mathbb{E}[\lVert \sigma - \mathbf{1}_K \rVert^2_F] \approx \sum_k \frac{s^2_{kk}}{\mathrm{ESS}_k} + \sum_{k \neq l} \frac{s^2_{kl}}{\mathrm{ESS}_k}.
    \end{equation}
    This motivates a simulation-specific tolerance band
    \begin{equation}
        \tau := \sqrt{\sum_{k \neq l} \frac{s^2_{kl}}{\mathrm{ESS}_k}}.
    \end{equation}
    That is, one declares orthogonality within Monte Carlo error if $\lvert\lvert \sigma - \mathbf{1}_K \rvert \rvert^2_F \lesssim 2\tau$.

    \item Stochastic-approximation drift: because $\lambda$ is annealed and gradients are noisy, the target $\sigma(\Theta)$ moves over iterations. Brief rebounds of the Frobenius norm were observed in optimisation to commonly coincide with dips in the $\mathrm{ESS}_k$, head trajectories crossing nodal surfaces, or when a stronger $\lambda$ trades a tiny energy gain for a modest overlap increase. These effects are transient and predicted by the ESS-based band in \eqref{eqapp:essband} above.
\end{itemize}


\section{\label{app:syssetup}Simulation parameters for full space resolution tests}

\begin{table*}[h]
\centering
\small
\setlength{\tabcolsep}{4pt}
\caption{\label{tab:syssetup}The simulation parameters used during all the simulations. Here, $N$ is the number of sites in the chain, $\eta$ is the learning rate, $n$ is the number of optimisation steps, $N_{MC}$ is the number of samples used in the Metropolis-Hastings sampler, $N_C$ is the number of chains, $\lambda_s$ is the starting penalty strength factor, $\lambda_f$ is the final penalty factor after annealing, $n_\lambda$ is the number of steps over which the penalty strength is annealed and $h$ is the trunk width.}
\begin{tabular}{ccccccccc}
\hline
$N$ & $\eta$ & $n$ & $N_{MC}$ & $N_C$ & $\lambda_s$ & $\lambda_f$ & $n_\lambda$ & $h$  
\\
\hline
4 & $1\times 10^{-3}$ & 1000 & 512 & 8 & $1\times 10^{-3}$ & $0.5$ & 200 & 32 \\
4(B) & $1\times 10^{-3}$ & 3000 & 1024 & 8 & $1\times 10^{-11}$ & $0.5$ & 1600 & 2 \\
4(C) & $1\times 10^{-3}$ & 2500 & 1024 & 8 & $1\times 10^{-8}$ & $1.5$ & 1200 & 4 \\
6 & $1\times 10^{-3}$ & 1000 & 512 & 8 & $1\times 10^{-3}$ & $0.5$ & 200 & 64 \\
8 & $1\times 10^{-3}$ & 1000 & 512 & 8 & $1\times 10^{-3}$ & $0.5$ & 200 & 64 \\
\hline
\end{tabular}
\end{table*}

\end{document}